\newcommand{\skipall}[1]{}
\newcommand{\vN}{\vdash_{\mathcal{NE}_p}}
\newcommand{\vL}{\vdash_{\mathcal{LE}_p}}
\date{}
\title{On an ecumenical natural deduction with {\em stoup} - Part I: The propositional case}
\author{Luiz Carlos Pereira\inst{1}\thanks{Pereira is supported by CAPES (PRINT and COFECUB) and CNPq.} \and Elaine Pimentel\inst{2}}
\institute{PUC-Rio/UERJ/CNPq \and UCL/UFRN\\ luiz@inf.puc-rio.br \; e.pimentel@ucl.ac.uk}
\begin{document}
\maketitle
\setcounter{secnumdepth}{3}
\section{Introduction}
Natural deduction systems, as proposed by Gentzen~\cite{gentzen1969} and further studied by Prawitz~\cite{prawitz1965}, is one of the most well known proof-theoretical frameworks. Part of its success is based on the fact that natural deduction rules present a simple characterization of logical constants, especially in the case of intuitionistic logic. However, there has been a lot of criticism on extensions of the intuitionistic set of rules in order to deal with classical logic. Indeed, most of such extensions add, to the usual introduction and elimination rules, extra rules governing  negation. As a consequence, several meta-logical properties, the most prominent one being {\em harmony}, are lost.\footnote{A logical connective is called {\em harmonious} in a certain proof system if there exists a certain balance between the rules defining it. For example, in natural deduction based systems, harmony is ensured when introduction/elimination rules do not contain insufficient/excessive amounts of information~\cite{DBLP:conf/ictac/Diaz-CaroD21}.}

In~\cite{DBLP:journals/Prawitz15}, Dag Prawitz proposed a
natural deduction {\em ecumenical system}, where classical logic and intuitionistic logic are codified in the same system. In this system, 
the classical logician and the intuitionistic logician would share the universal quantifier, conjunction, negation and the constant for the absurd ($\forall,\wedge,\neg,\bot$), but they would each have their own existential quantifier, disjunction and implication, with different meanings ($\exists_j,\vee_j,\to_j$, where $j\in\{i,c\}$ for the intuitionistic and classical versions, respectively). Prawitz' main idea is that these different meanings are given by a semantical framework that can be accepted by both parties.  

In his ecumenical system, Prawitz recovers the harmony of rules, but the rules for the classical operators do not satisfy {\em separability}~\cite{Murzi2018}. In fact, the classical rules are not {\em pure}, in the sense that 
negation 
is used in the definition of the introduction and elimination rules for the classical operators. 

For example, the rules for $\vee_{c}$ are defined as

\begin{prooftree}
\AxiomC{$[\neg A,\neg B]$}
\noLine
\UnaryInfC{$\Pi$}
\noLine
\UnaryInfC{$\bot$}
\RightLabel{$\vee_{c}$-int}
\UnaryInfC{$A \vee_{c} B$}
\DisplayProof
\qquad
\AxiomC{$A \vee_{c} B$}
\AxiomC{$\neg A$}
\AxiomC{$\neg B$}
\RightLabel{$\vee_{c}$-elim}
\TrinaryInfC{$\bot$}
\end{prooftree}
The situation is not different in the case of the definition of left and right rules for these classical operators in a {\em sequent calculus} codification, as presented in~\cite{DBLP:journals/synthese/PimentelPP21}. The rules for $\vee_{c}$, for example, are defined as
\begin{prooftree}
\AxiomC{$\Gamma , \neg A , \neg B \Rightarrow \bot$}
\RightLabel{$\vee_{c}$-R}
\UnaryInfC{$\Gamma \Rightarrow A \vee_{c} B$}
\DisplayProof
\qquad
\AxiomC{$\Gamma \Rightarrow \neg A$}
\AxiomC{$\Delta \Rightarrow \neg B$}
\RightLabel{$\vee_{c}$-L}
\BinaryInfC{$A \vee_{c} B, \Gamma ,\Delta \Rightarrow \bot$}
\end{prooftree}

There are many ways of proposing pure, harmonic natural deduction systems for (propositional) classical logic. 
Indeed, Murzi~\cite{Murzi2018} proposes a new set of rules for classical logical operators based on absurdity as a punctuation mark, and higher-level rules~\cite{DBLP:journals/sLogica/Schroeder-Heister14}. 
D'Agostino~\cite{DBLP:conf/birthday/DAgostino05}, on the other hand, brings a totally sifferent approach, presenting a theory of classical natural deduction that makes a distinction between operational rules, governing the use of logical operators, and structural rules dealing with the metaphysical assumptions governing the (classical) notions of truth and falsity, namely the principle of bivalence and the principle of non-contradiction.

A complete different approach is presented in~\cite{DBLP:conf/birthday/GabbayG05}, where Michael and Murdoch Gabbay present the natural deduction version of Dov Gabbay's {\em Restart} rule
\[
\infer[Restart]{B}{A}
\]
with the side-condition that, below every occurrence of $Restart$ from $A$ to $B$, there is (at least) one occurrence of $A$. The intended meaning is that $B$ is a new start to a line of reasoning concluding $A$. For example, in the derivation of the Peirce's Law
\begin{prooftree}
\AxiomC{$[(A \to B) \to A)]$}
\AxiomC{$[A]$}
\RightLabel{$Restart^\star$}
\UnaryInfC{$B$}
\RightLabel{$\to$-int}
\UnaryInfC{$A\to B$}
\RightLabel{$\to$-elim}
\BinaryInfC{$A^\dagger$}
\RightLabel{$\to$-int}
\UnaryInfC{$((A \to B) \to A) \to_{c} A)$}
\end{prooftree}
the restart at ${}^\star$ is justified at $\dagger$.

Similar to the Gabbays, Restall's $Alt$ rule~\cite{restall} 
\[
\infer[Alt, \downarrow A]{B}{\deduce{A}{\deduce{}{\Pi}}}
\]
has the following interpretation: Having a proof of $A$, one can set $A$ aside and consider some alternative conclusion, $B$, while $A$ is added to the collection of alternatives current at this point of the proof. Hence, the proof of Peirce's Law would have the form
\begin{prooftree}
\AxiomC{$[(A \to B) \to A)]$}
\AxiomC{$[A]$}
\RightLabel{$Alt, \downarrow A$}
\UnaryInfC{$B$}
\RightLabel{$\to$-int}
\UnaryInfC{$A\to B$}
\RightLabel{$\to$-elim,$\uparrow A$}
\BinaryInfC{$A$}
\RightLabel{$\to$-int}
\UnaryInfC{$((A \to B) \to A) \to_{c} A)$}
\end{prooftree}

In this paper, we propose a different approach adapting, to the natural deduction framework,  Girard's mechanism of \textit{stoup}~\cite{DBLP:journals/mscs/Girard91}.  This will allow the definition of a pure harmonic natural deduction system $\mathcal{LE}_{p}$ for the propositional fragment of  Prawitz' ecumenical logic, where $Restart$ and $Alt$ appear as special cases of the use of {\em stoup}.\footnote{It should be noted that this idea  appears somewhat hidden in~\cite{restall} for the propositional classical case, where Restall uses sequent style
$X\rest A; Y$ (the `score'), where $X$ represent the undischarged assumptions, $A$ the current conclusion, and $Y$ the {\em alternatives}.}

\section{Ecumenical natural deduction system}\label{sec:ec}

The language $\Lscr$ used for ecumenical systems is described as follows. We will use a subscript $c$ for the classical meaning and $i$ for the intuitionistic, dropping such subscripts when formulae/connectives can have either meaning. 

Classical and intuitionistic  n-ary predicate symbols ($p_{c}, p_{i},\ldots$) co-exist in $\Lscr$ but have different meanings. 
The neutral logical connectives $\{\bot,\neg,\wedge,\forall\}$ are common for classical and intuitionistic fragments, while $\{\iimp,\ivee,\iexists\}$ and $\{\cimp,\cvee,\cexists\}$ are restricted to intuitionistic and classical interpretations, respectively. In order to avoid clashes of variables we make use of a denumerable set $a, b,\ldots$ of special variables called {\em parameters}, which do not appear quantified.

In Fig.~\ref{fig:NE} we present $\mathcal{NE}$ Prawitz' original natural deduction first-order ecumenical system. In the rules for quantifiers, the notation $A(a/x)$  stands for the substitution of $a$ for every (visible) instance of $x$ in $A$. In the rules $\exists_i\mbox{-elim},\forall\mbox{-int}$, $a$ is a fresh parameter, \ie, it does not occur free in any assumption that $A,B$ depends on (apart from the assumption eliminated by $\exists_{i}$-elim).

\begin{figure}[htp]
{\sc Intuitionistic rules}
\[
\begin{array}{lc@{\quad}l}
\infer[{\iimp\mbox{-elim}}]{B}{A\iimp B & A}
&
\infer[{\iimp\mbox{-int}}]{ A\iimp B}
{\deduce{B}{\deduce{}{\deduce{\Pi}{\deduce{}{[A]}}}}} 
&
\infer[{\vee_i\mbox{-elim}}]{C}{ A\vee_i B & \deduce{C}{\deduce{}{\deduce{\Pi_1}{\deduce{}{[A]}}}}
 &\deduce{C}{\deduce{}{\deduce{\Pi_2}{\deduce{}{[B]}}}}} 
\\[5pt]
\infer[{\vee_i\mbox{-int}_j}]{ A_1\vee_i A_2}{ A_j}
&
\infer[\exists_i\mbox{-elim}]{ B}{ \exists_ix.A &  \deduce{B}{\deduce{}{\deduce{\Pi}{\deduce{}{[A(a/x)]}}}}}
&
\infer[\exists_i\mbox{-int}]{\exists_ix.A }{ A(a/x)}
\end{array}
\]
{\sc Classic rules}
\[
\begin{array}{lc@{\quad}l}
\infer[{\cimp\mbox{-elim}}]{\bot}{ A\cimp B & A & \neg B}
&
\infer[{\cimp\mbox{-int}}]{A\cimp B}
{\deduce{\bot}{\deduce{}{\deduce{\Pi}{\deduce{}{[A,\neg B]}}}}} 
&
\infer[{\vee_c\mbox{-elim}}]{\bot}{A\vee_c B & \neg A &  \neg B} 
\\[5pt]
\infer[{\vee_c\mbox{-int}}]{A\vee_c B}{\deduce{\bot}{\deduce{}{\deduce{\Pi}{\deduce{}{[\neg A,\neg B]}}}}} 
&
\infer[\exists_c\mbox{-elim}]{\bot}{ \exists_cx.A &  \forall x.\neg A}
&
\infer[\exists_c\mbox{-int}]{ \exists_cx.A}{\deduce{\bot}{\deduce{}{\deduce{\Pi}{\deduce{}{[\forall x.\neg A]}}}}}
\\[5pt]
\infer[p_c\mbox{-elim}]{ \bot}
{
p_c &  \neg p_i
}
&
\infer[p_c\mbox{-int}]{ p_c}
{
\deduce{\bot}{\deduce{}{\deduce{\Pi}{\deduce{}{[\neg p_i]}}}}
}
\end{array}
\]
{\sc Neutral rules}
\[
\begin{array}{lc@{\qquad}lc@{\qquad}lc@{\qquad}l}
\infer[{\wedge\mbox{-elim}_j}]{ A_j}{A_1\wedge A_2} 
& &
\infer[{\wedge\mbox{-int}}]{A \wedge B}{A \quad  B} 
& &
\infer[{\neg\mbox{-elim}}]{\bot}{ A &  \neg A}
& &
\infer[{\neg\mbox{-int}}]{\neg A}{\deduce{\bot}{\deduce{}{\deduce{\Pi}{\deduce{}{[A]}}}}}
\\[5pt]
\infer[\bot\mbox{-elim}]{A}{ \bot}
& &
\infer[\forall\mbox{-elim}]{A(a/x) }{\forall x.A}
& &
\infer[\forall\mbox{-int}]{ \forall x.A}{ A(a/x)}
\end{array}
\]

\caption{Ecumenical natural deduction system $\mathcal{NE}$.  In rules 
$\forall\mbox{-int}$ and $\exists_i\mbox{-elim}$, the parameter $a$ is fresh.}\label{fig:NE}
\end{figure}

The propositional fragment of the natural deduction Ecumenical system proposed by Prawitz (here called $\mathcal{NE}_{p}$) has been proved normalizing, sound and complete with respect to  intuitionistic logic's Kripke semantics in~\cite{luiz17}. 

The rules for intuitionistic implication are the traditional ones, while the rules for classical implication make sure that $A\to_{c} B$ is treated as $\neg A\lor_c B$, its classical rendering. The surprising facts are that (i) one can have a single constant for absurdity $\bot$ (instead of two, one intuitionistic and one classical, taking that absurd as the unit of disjunction, of which we have two variants) and (ii) that the intuitionistic and classical negations coincide.
If negation was simply implication into false (as it is the case for intuitionistic negation) one might expect two negations, one intuitionistic and one classical.

\section{Sequent Calculus with \textit{stoup}}\label{sec:sc}

$\LC$~\cite{DBLP:journals/mscs/Girard91} is a sequent system for classical logic that separates the rules for {\em positive} and {\em negative} formulas, being a precursor of the notion of {\em focusing} in sequent systems~\cite{andreoli92jlc}. The polarity of a formula in $\LC$ is determined by its outermost connective and the polarity of its subformulas. For example, atoms and unities are always positive and, if $P,Q$ are positive formulas, then  $P\wedge Q$ is also positive. The table of polarities can be checked in~\cite{DBLP:journals/mscs/Girard91}, page 10.

In $\LC$,  sequents have the form
$\seq \Delta ; \Sigma$, where $\Delta , \Sigma$ are multisets of formulas, with $\Sigma$, called the {\em stoup}, containing {\em at most} one formula. The main idea is that the {\em stoup} controls the rule applications, in the sense that a positive active formula in the conclusion of a rule is always placed there, while active negative formulas are handled in the classical context.
Characteristic examples are the rules for the conjunction of positive/negative formulas  
\[
\infer[\wedge_p]{\seq \Gamma,\Delta ; P\wedge Q}{\seq \Gamma ; P & \seq \Delta ; Q}\qquad
\infer[\wedge_n]{\seq \Gamma,M\wedge N;\Sigma}{\seq \Gamma, N;\Sigma & \seq \Gamma,M ; \Sigma}
\]
Observe that they also have a multiplicative/additive flavor.

The idea of focusing is also present in $\LC$, with the {\em dereliction} and {\em store} rules
\[
\infer[der]{\seq \Gamma,P;\cdot}{\seq \Gamma;P}\qquad \qquad
\infer[store]{\seq \Gamma;N}{\seq \Gamma,N;\cdot}{}
\]
While in $der$ positive formulas can be chosen to be focused on, in $store$ negative formulas are stored in the classical context, in a bottom-up reading of rules. This enables for a {\em two-phase} proof construction, where the focused formula $P$ is systematically decomposed until reaching a leaf or a negative sub-formula $N$. In this last case, focusing is lost and $N$ is stored, allowing for the beginning of a new focused phase.

Finally, due to polarities, $\LC$ has two admissible cut rules
\[
\infer[p-cut]{\seq \Gamma,\Delta;\Sigma}{\seq \Gamma;P & \seq \neg P,\Delta;\Sigma}\qquad\qquad
\infer[n-cut]{\seq \Gamma,\Delta;\Sigma}{\seq \Gamma,N;\cdot & \seq \neg N,\Delta;\Sigma}
\]

In $\LC$, the sequent $\seq \Delta ; \Sigma$ has
an intuitionistic interpretation: $\neg \Delta\seq \bot$ if $\Sigma$ is empty and 
$\neg \Delta\seq A$ if $\Sigma=A$. That is, the context $\Delta$ makes the classical information persistent, via an implicit double negation elimination. This implies that sequents with empty {\em stoup} have also a classical interpretation, using \eg\ G\"{o}del's double negation translation. Sequents with non-empty stoup do not have a classical interpretation, as discussed in~\cite{DBLP:journals/mscs/Girard91}.

Hence one could say that sequents with stoup have a certain ecumenical flavor: formulas with intuitionistic behavior are identified as being {\em positive}, while formulas with classical 
behavior are identified as being {\em negative}.

In this work, we will carry out a similar idea under the spectrum of Prawitz' ecumenical natural deduction system.
While it has some similarities with Girard's original proposal, our system will not consider polarities, and all the conclusion formulas of introduction/elimination rules will be placed in the {\em stoup}.

\section{Natural Deduction with \textit{stoup} - the propositional system $\mathcal{LE}_{p}$}

We  will now incorporate the notion of {\em stoups} to natural deduction in the case of propositional logic, showing its natural connection to the ecumenical setting.

Let the expression  $\Delta;\Sigma$ denote an {\em stoup with a context} (abreviated as \textit{stp-c}), an extension of natural deduction formulas, where $\Sigma$ is the {\em stoup} and  $\Delta$ is its accompanying  context (called {\em alternatives} in~\cite{restall}). As for the case of $\LC$, the {\em stoup} will carry the intuitionistic (positive) and neutral information, while the context accumulates the classical information related to it.

In the following, we will construct ecumenical introduction and elimination rules over the {\em stoup} in a step-by-step manner, justifying all our choices.

\subsection{Intuitionistic operators}
\subsubsection*{Implication.}
The following result, proved in~\cite{DBLP:journals/synthese/PimentelPP21} for the sequent calculus ecumenical system, states that logical consequence in $\mathcal{NE}$ is interpreted intuitionistically. 
\begin{theorem}\label{prop:ev}
Let $\Gamma$ be a set of ecumenical formulas. Then 
$B$ is provable from $\Gamma$ in ecumenical logic iff $\bigwedge\Gamma\iimp B$ is provable in $\mathcal{NE}$.
\end{theorem}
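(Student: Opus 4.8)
The plan is to prove both directions of the biconditional by induction, exploiting the standard deduction-theorem-style correspondence between natural deduction derivations and the introduction/elimination rules for intuitionistic implication in $\mathcal{NE}$.

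First I would prove the easier left-to-right direction. Suppose $B$ is provable from $\Gamma$, say $\Gamma = \{A_1,\ldots,A_n\}$. I would take the derivation of $B$ with open assumptions among $A_1,\ldots,A_n$ and repeatedly apply $\iimp\mbox{-int}$, discharging each $A_k$ in turn, to build a derivation of $A_1 \iimp (A_2 \iimp \cdots \iimp (A_n \iimp B))$ from no assumptions. To reach the stated form $\bigwedge\Gamma \iimp B$, I would then prepend a derivation that extracts each conjunct from the single assumption $\bigwedge\Gamma$ using $\wedge\mbox{-elim}_j$ (iterated for the nested conjunction), re-derive $B$, and close with a final $\iimp\mbox{-int}$ discharging $\bigwedge\Gamma$. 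This step is routine and uses only the neutral $\wedge$ rules together with $\iimp\mbox{-int}$, so harmony and the intuitionistic reading of $\iimp$ are exactly what make it go through.

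Next I would handle the right-to-left direction. Assume $\bigwedge\Gamma \iimp B$ is provable in $\mathcal{NE}$, \ie\ from the empty set of assumptions. I would first build, from the assumptions $A_1,\ldots,A_n$, a derivation of $\bigwedge\Gamma$ by iterated $\wedge\mbox{-int}$; this gives $\bigwedge\Gamma$ provable from $\Gamma$. Combining the assumed derivation of $\bigwedge\Gamma \iimp B$ with this derivation via $\iimp\mbox{-elim}$ yields $B$ from $\Gamma$, which is exactly what is required. The crucial point here is that $\iimp\mbox{-elim}$ is genuine modus ponens for the intuitionistic implication, so no classical detour is needed.

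The main obstacle I expect is not logical but bookkeeping: pinning down a precise convention for $\bigwedge\Gamma$ when $\Gamma$ is empty or a singleton (so that the statement degenerates correctly, with $\bigwedge\EMP$ read as $\top$ and $\bigwedge\{A\}$ as $A$), and carefully tracking the discharge of assumptions so that the open-assumption set is exactly $\Gamma$ at each stage. One must also be sure that the conjunction-elimination chain recovers \emph{every} conjunct regardless of how the iterated $\wedge$ is associated; fixing a right-nested parsing of $\bigwedge\Gamma$ once and for all makes both the introduction and elimination passes uniform. Since the theorem is stated as already proved for the sequent calculus in~\cite{DBLP:journals/synthese/PimentelPP21}, an alternative is simply to invoke soundness and completeness of $\mathcal{NE}_p$ together with that result; but the direct syntactic argument above is self-contained and avoids any appeal to semantics.
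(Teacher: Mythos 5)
Your argument is correct, but be aware that the paper itself does not prove this statement: it is imported from~\cite{DBLP:journals/synthese/PimentelPP21}, where the corresponding fact is established for the \emph{sequent calculus} presentation of ecumenical logic and then carried over to $\mathcal{NE}$. Your proof is instead a direct, self-contained deduction-theorem argument inside $\mathcal{NE}$: for the forward direction, replace each open assumption $A_k$ by a derivation of $A_k$ from the single assumption $\bigwedge\Gamma$ via iterated $\wedge$-elim and close with one application of $\iimp$-int; for the converse, rebuild $\bigwedge\Gamma$ by iterated $\wedge$-int and apply $\iimp$-elim. This buys independence from the sequent-calculus metatheory and from any soundness/completeness detour, at the cost of doing the bookkeeping yourself. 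Three small points would tighten it. First, the statement only makes sense for finite $\Gamma$ (derivations use finitely many assumptions), so say so explicitly. Second, the ecumenical language has no $\top$ among its connectives, so for empty $\Gamma$ you should read $\bigwedge\EMP$ as, say, $\neg\bot$ rather than $\top$. Third, since $\mathcal{NE}$ is the first-order system, grafting the $\wedge$-elimination chain onto an assumption $A_k$ enlarges the set of assumptions each premise depends on (it now depends on all of $\bigwedge\Gamma$), which can violate the freshness conditions of $\forall$-int and $\exists_i$-elim when a parameter occurs in some $A_j$ on which the original subderivation did not depend; a routine renaming of eigenparameters repairs this, but it deserves a sentence. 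None of this affects the substance: the feature you exploit --- that $\iimp$-int carries no side condition and $\iimp$-elim is genuine modus ponens --- is precisely what makes the ecumenical consequence relation intuitionistic, which is the content the paper needs from this theorem.
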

Hence the rule
\[
\infer[{\iimp\mbox{-int}}]{ A\iimp B}
{\deduce{B}{\deduce{}{\deduce{\Pi}{\deduce{}{[A]}}}}}
\]
induces the rule with {\em stoup}
\[
\infer[{\iimp\mbox{-int}}]{\Delta;A\iimp B}{\deduce{\Delta;B}{\deduce{}{\deduce{\Pi}{\deduce{}{[\cdot;A] & \Gamma}}}}}
\]
where the {\em stoup} is preserved in discarded assumptions. Note that all the information $\Gamma$ about contexts should be remembered.

For the implication elimination rule, observe that different {\em stoups} carry different contexts, so we will have the multiplicative version of the rule, combining the classical information
\begin{prooftree}
\AxiomC{$\Delta_{1};A \to_{i} B$}
\AxiomC{$\Delta_{2};A$}
\RightLabel{$\to_{i}$-elim}
\BinaryInfC{$\Delta_{1},\Delta_{2};B$}
\end{prooftree}

\subsubsection*{Disjunction.}
The rule for introduction is
\begin{prooftree}
\AxiomC{$\Delta ;A_i$}
\RightLabel{$\vee_{i}$-int}
\UnaryInfC{$\Delta ;A_1 \vee_{i} A_2$}
\end{prooftree}

\noindent while the elimination rule combines the all the context information

\begin{prooftree}
\AxiomC{$\Gamma_1$}
\noLine
\UnaryInfC{$\Pi_{1}$}
\noLine
\UnaryInfC{$\Delta_{1};A \vee_{i} B$}
\AxiomC{$[\cdot ;A]$}
\AxiomC{$\Gamma_2$}
\noLine
\BinaryInfC{$\Pi_{2}$}
\noLine
\UnaryInfC{$\Delta_{2};C$}
\AxiomC{$[\cdot ;B]$}
\AxiomC{$\Gamma_3$}
\noLine
\BinaryInfC{$\Pi_{3}$}
\noLine
\UnaryInfC{$\Delta_{3};C$}
\RightLabel{$\vee_{i}$-elim}
\TrinaryInfC{$\Delta_{1}, \Delta_{2}, \Delta_{3};C$}
\end{prooftree}

\subsection{Classical operators}
\subsubsection*{Implication.} Observe that the negated assumptions in $\mathcal{NE}$ will correspond to the classical counterpart of the stoup, so the consequent of the implication will be stored in this context. The introduction rule 
\[
\infer[{\cimp\mbox{-int}}]{A\cimp B}
{\deduce{\bot}{\deduce{}{\deduce{\Pi}{\deduce{}{[A,\neg B]}}}}} 
\]
then becomes 
\begin{prooftree}
\AxiomC{$[\cdot ;A]$}
\AxiomC{$\Gamma$}
\noLine
\BinaryInfC{$\Pi$}
\noLine
\UnaryInfC{$\Delta ,B;\cdot$}
\RightLabel{$\to_{c}$-int}
\UnaryInfC{$\Delta ;A \to_{c} B$}
\end{prooftree}
For the elimination rule, also the negated formula in the premise become classical, this time with empty stoup.

\begin{prooftree}
\AxiomC{$\Delta_1; A \to_{c} B$}
\AxiomC{$\Gamma_2$}
\noLine
\UnaryInfC{$\Pi_2$}
\noLine
\UnaryInfC{$\Delta_{2};A$}
\AxiomC{$[\cdot;B]$}
\AxiomC{$\Gamma_3$}
\noLine
\BinaryInfC{$\Pi_3$}
\noLine
\UnaryInfC{$\Delta_{3} ;\cdot$}
\RightLabel{$\to_{c}$-elim}
\TrinaryInfC{$\Delta_{1},\Delta_{2},\Delta_{3};\cdot$}
\end{prooftree}

\subsubsection*{Disjunction.}
The same idea applies to disjunction, where negated assumptions become part of the classical context, while positive assumptions are kept as {\em stoups}
\begin{prooftree}
\AxiomC{$\Delta ,A,B; \cdot $}
\RightLabel{$\vee_{c}$-int}
\UnaryInfC{$\Delta ;A \vee_{c} B$}
\end{prooftree}

\begin{prooftree}
\AxiomC{$\Delta_1;A \vee_{c} B$}
\AxiomC{$[\cdot ;A]$}
\AxiomC{$\Gamma_{2}$}
\noLine
\BinaryInfC{$\Pi_2$}
\noLine
\UnaryInfC{$\Delta_{2} ;\cdot$}
\AxiomC{$[\cdot ;B]$}
\AxiomC{$\Gamma_{3}$}
\noLine
\BinaryInfC{$\Pi_3$}
\noLine
\UnaryInfC{$\Delta_{3} ;\cdot$}
\RightLabel{$\vee_{c}$-elim}
\TrinaryInfC{$\Delta_{1},\Delta_{2},\Delta_{3} ; \cdot$}
\end{prooftree}

\subsection{Neutral operators}
\subsubsection*{Negation.} Since negation can be defined in classical/intuitionistic logic as ``implies bottom'', the rules for $\neg$ can be derived from the ones for implication with an empty {\em stoup}.  

\begin{prooftree}
\AxiomC{$[\cdot ;A]$}
\AxiomC{$\Gamma$}
\noLine
\BinaryInfC{$\Pi$}
\noLine
\UnaryInfC{$\Delta ; \cdot$}
\RightLabel{$\neg$-int}
\UnaryInfC{$\Delta ;\neg A$}
\end{prooftree}

\begin{prooftree}
\AxiomC{$\Delta_{1};A$}
\AxiomC{$\Delta_{2};\neg A$}
\RightLabel{$\neg$-elim}
\BinaryInfC{$\Delta_{1}, \Delta_{2}; \cdot$}
\end{prooftree}

\subsubsection*{Conjunction.} Since our ecumenical system is essentially intuitionistic (in the terms of Proposition~\ref{prop:ev}), all active formulas are placed in the {\em stoup}. Hence we will adopt the multiplicative version of Girard's rule for positive conjuncts.

\begin{prooftree}
\AxiomC{$\Delta_{1};A$}
\AxiomC{$\Delta_{2};B$}
\RightLabel{$\wedge$-int}
\BinaryInfC{$\Delta_{1},\Delta_{2};A \wedge B$}
\end{prooftree}

\begin{prooftree}
\AxiomC{$\Delta ;A_1 \wedge A_2$}
\RightLabel{$\wedge$-elim$_j$}
\UnaryInfC{$\Delta ;A_j$}
\end{prooftree}

\subsection{Hypothesis formation and dereliction}
The hypothesis formation is the usual one and, as in $\LC$, dereliction is needed for guaranteeing  the completeness of the system, since ecumenical active formulas are always placed in the {\em stoup}.

\noindent
\textit{Hypothesis formation}
\begin{prooftree}
\AxiomC{$\cdot;A$}
\end{prooftree}

\textit{Dereliction}
\begin{prooftree}
\AxiomC{$\Delta ;A$}
\RightLabel{der}
\UnaryInfC{$\Delta ,A;\cdot $}
\end{prooftree}

\subsection{Structural rules}
Finally, on choosing the multiplicative version of rules, we need structural rules acting in the classical context, so to transform multisets into sets. As usual in intuitionistic systems, weakening is also allowed in the {\em stoup}.

\textit{Weakening}
\begin{prooftree}
\AxiomC{$\Delta ; \cdot$}
\RightLabel{$W_i$}
\UnaryInfC{$\Delta  ; A$}
\DisplayProof
\qquad \qquad
\AxiomC{$\Delta ; C$}
\RightLabel{$W_c$}
\UnaryInfC{$\Delta , A ; C$}
\end{prooftree}

\textit{Contraction}
\begin{prooftree}
\AxiomC{$\Delta , A,A ; C$}
\RightLabel{$C_c$}
\UnaryInfC{$\Delta , A ; C$}
\end{prooftree}

Derivations are then  inductively defined in the usual way. 
\begin{definition}\label{def:der}
We say that the \textit{stp-c} $\Delta ;\Sigma$ is {\em derivable} from a set $\Gamma$ of \textit{stp-cs}  in $\mathcal{LE}_{p}$ (denoted by $\Gamma \vL \Delta ;\Sigma $) if and only if there is a derivation of $\Delta ;\Sigma$ from $\Gamma$. A formula $A$ is a {\em theorem} of $\mathcal{LE}_{p}$ if and only if $\vL \cdot ; A$. 
\end{definition}
As usual, we also may add indices in derivations, for relating a discharged assumption with a specific rule application.

\section{Examples}
We present below the proofs of some classical tautologies in $\mathcal{LE}_{p}$. 
\begin{enumerate}
\item \textit{Peirce's Law}
\begin{prooftree}
\AxiomC{$[\cdot ; ((A \to_{c} B) \to_{c} A)]^{3}$}
\AxiomC{$[\cdot ;A]^{1}$}
\RightLabel{der}
\UnaryInfC{$A;\cdot$}
\RightLabel{$W_{c}$}
\UnaryInfC{$A,B;\cdot$}
\RightLabel{$\to_{c}$-int}
\LeftLabel{1}
\UnaryInfC{$A;(A \to_{c} B)$}
\AxiomC{$[\cdot ;A]^{2}$}
\RightLabel{der}
\UnaryInfC{$A;\cdot$}
\RightLabel{$\to_{c}$-elim}
\LeftLabel{2}
\TrinaryInfC{$A,A; \cdot $}
\RightLabel{$C_{c}$}
\UnaryInfC{$A; \cdot $}
\LeftLabel{3}
\RightLabel{$\to_{c}$-int}
\UnaryInfC{$\cdot ; (((A \to_{c} B) \to_{c} A) \to_{c} A)$}
\end{prooftree}
Observe that the only difference of the proof above w.r.t. $Alt$ or $Restart$ systems lies in the use of structural rules in the classical context.

More interestingly, note that any sequent of the form  $(((A \to_{j} B) \to_{k} A) \to_{c} A)$ with $j,k\in\{i,c\}$ is provable in  $\mathcal{LE}_{p}$. That is, provability is maintained if the outermost implication is classical.
\item \textit{Excluded-middle}
\begin{prooftree}
\AxiomC{$[\cdot ; A]^{1}$}
\RightLabel{der}
\UnaryInfC{$A; \cdot $}
\RightLabel{$\neg$-int}
\LeftLabel{1}
\UnaryInfC{$ A ; \neg A$}
\RightLabel{der}
\UnaryInfC{$ A , \neg A ; \cdot$}
\RightLabel{$\vee_{c}$-int}
\UnaryInfC{$ \cdot ; (A \vee_{c} \neg A) $}
\end{prooftree}
Of course, $ \cdot ;A \vee_{i} \neg A$ is not a theorem in $\mathcal{LE}_{p}$.
\item \textit{Dummett's linearity axiom}
\begin{prooftree}
\AxiomC{$[\cdot ;A]^{1}$}
\RightLabel{der}
\UnaryInfC{$A; \cdot$}
\RightLabel{$W_{c}$}
\UnaryInfC{$A, B;\cdot $}
\RightLabel{$\to_{c}$-int}
\LeftLabel{$1$}
\UnaryInfC{$A;(A \to_{c} B)$}
\RightLabel{der}
\UnaryInfC{$A, (A \to_{c} B);\cdot $}
\RightLabel{$\to_{c}$-int}
\UnaryInfC{$(A \to_{c} B); (B \to_{c} A)$}
\RightLabel{der}
\UnaryInfC{$(A \to_{c} B), (B \to_{c} A)); \cdot$}
\RightLabel{$\vee_{c}$-int}
\UnaryInfC{$\cdot ; ((A \to_{c} B) \vee_{c}  (B \to_{c} A))$}
\end{prooftree}
This is also an interesting case, where any sequent of the form  $ ((A \to_{j} B) \vee_{c}  (B \to_{k} A))$ with $j,k\in\{i,c\}$ is provable in  $\mathcal{LE}_{p}$. That is, provability is maintained if the outermost conjunction is classical.
\end{enumerate}

\section{Systems equivalence}
In the following, we will show that $\mathcal{LE}_{p}$ is correct and complete w.r.t. $\mathcal{NE}_p$. We will use the following extra notation:
\begin{itemize}
\item[-] Given a multiset $\Delta$ of 
formulas,  we denote by $\neg \Delta $ the multiset formed by the negation of each formula in $\Delta$.
\item[-] If $\Gamma$ is a set of formulas, we denote by $\Gamma \vN A$ the fact that the formula $A$ depends on the set $\Gamma$ of assumptions  in $\mathcal{NE}_{p}$.
\item[-] If $\Gamma=\bigcup\{\cdot;A_i\}_{0\leq i\leq n}$ is a multiset of {\em stoup} with empty contexts  in $\mathcal{LE}_{p}$,  we will abuse the notation and also represent by $\Gamma$ the underlying set of formulas in these {\em stp-c}, that is, $\Gamma=\bigcup\{A_i\}_{0\leq i\leq n}$ in $\mathcal{NE}_{p}$.
\end{itemize}
\begin{theorem}
Let $\Gamma=\bigcup\{\cdot;A_i\}_{0\leq i\leq n}$ be a set of hypothesis. Then 
$\Gamma  \vL \Delta;\Sigma$ iff $\Gamma , \neg \Delta\vN \Sigma$. In case $\Sigma$ is empty, we have $\Gamma , \neg \Delta\vN \bot$. 
\end{theorem}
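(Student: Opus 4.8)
The plan is to prove both directions by induction on derivations, exploiting the fact that every rule of $\mathcal{LE}_{p}$ was designed to mirror a rule of $\mathcal{NE}_{p}$ under the reading that sends a \textit{stp-c} $\Delta;\Sigma$ to the judgement $\Gamma,\neg\Delta \vN \Sigma$ (and to $\Gamma,\neg\Delta\vN\bot$ when $\Sigma$ is empty). Thus the classical context $\Delta$ is uniformly read as the negated assumptions $\neg\Delta$, exactly as in Girard's intuitionistic reading of $\LC$ recalled in Section~\ref{sec:sc}. Establishing the equivalence then amounts to checking that this reading is preserved by each rule in one direction and reconstructed from $\mathcal{NE}_{p}$ inferences in the other.

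For soundness ($\Gamma \vL \Delta;\Sigma$ implies $\Gamma,\neg\Delta\vN\Sigma$) I would induct on the $\mathcal{LE}_{p}$ derivation and verify the reading rule by rule. Hypothesis formation is immediate; dereliction, taking $\Delta;A$ to $\Delta,A;\cdot$, becomes an application of $\neg$-elim to the derived $A$ against the new assumption $\neg A$; the introduction and elimination rules each collapse onto their $\mathcal{NE}_{p}$ originals, where a discharged stoup assumption $[\cdot;A]$ is discharged as the positive assumption $A$ and a context formula $B$ corresponds to discharging $\neg B$ (for the classical operators this is precisely the shape of $\cimp$-int, $\cvee$-int and their eliminations). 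The structural rules are routine: $W_{i}$ maps to $\bot$-elim, while $W_{c}$ and $C_{c}$ map to weakening and contraction on the $\mathcal{NE}_{p}$ assumption pool. This direction is essentially bookkeeping, since the stoup rules were reverse-engineered from the reading.

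The completeness direction ($\Gamma,\neg\Delta\vN\Sigma$ implies $\Gamma\vL\Delta;\Sigma$) is the substantive one, and I would prove it by induction on the $\mathcal{NE}_{p}$ derivation, translating each inference back into $\mathcal{LE}_{p}$. The recurring move is to keep the genuine hypotheses in $\Gamma$ (as empty-context \textit{stp-c}s) and to record in $\Delta$ every negated assumption that a classical introduction discharges: a use of $\cimp$-int whose premise derives $\bot$ from $\ldots,A,\neg B$ is handled by invoking the induction hypothesis with $A$ added to $\Gamma$ and $B$ added to $\Delta$, which yields exactly the premise $\Delta,B;\cdot$ demanded by the stoup rule $\cimp$-int. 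Because the stoup rules are multiplicative whereas $\mathcal{NE}_{p}$ shares one global pool of assumptions across branches, I would use $W_{c}$ to pad each branch's context up to the required $\Delta$ and $C_{c}$ to contract the resulting duplicate copies after a multiplicative rule fires, with $W_{i}$ supplying the $\bot$-elim whenever an empty-stoup sub-result must be coerced into a non-empty conclusion.

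The main obstacle, and where I expect the real work to lie, is the mismatch between a formula living in the stoup as $\neg B$ and the same formula living in the context as $B$: the $\mathcal{NE}_{p}$ elimination rules for the classical operators consume a premise deriving $\neg B$, whereas the corresponding stoup rule expects a subderivation from the hypothesis $[\cdot;B]$ with empty stoup. I would isolate this as a small admissibility lemma stating that $\Gamma\vL\Delta;\neg B$ and $\Gamma$-together-with-$\cdot;B \vL \Delta;\cdot$ are interderivable in $\mathcal{LE}_{p}$ --- one direction by $\neg$-int discharging the positive $B$, the other by $\neg$-elim against the hypothesis $\cdot;B$ --- and then apply it uniformly at each classical elimination step, and in the base case where $\Sigma$ is itself a negated assumption $\neg B$ with $B\in\Delta$ (reflected into the stoup via dereliction, $W_{c}$ and $\neg$-int). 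Once this shift lemma and the weakening/contraction management of $\Delta$ are in place, the remaining cases are direct, and the empty-stoup clause of the statement falls out by reading $\bot$ for the empty $\Sigma$ throughout.
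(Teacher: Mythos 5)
Your proposal is correct and follows essentially the same route as the paper: both directions are proved by induction on the length of derivations, reading the classical context $\Delta$ as the negated assumptions $\neg\Delta$, with the interesting cases being the classical connectives, dereliction and the structural rules. The only difference is organizational --- you package the interconversion between $\Delta;\neg B$ and a subderivation from $[\cdot;B]$ to $\Delta;\cdot$ as a separate shift lemma, whereas the paper performs exactly that conversion inline via the negation rules in each classical elimination case.
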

\begin{proof}
By induction on the length of derivations in $\mathcal{LE}_{p}$ and $\mathcal{NE}_{p}$. The only interesting cases are the ones involving classical connectives and structural rules. 
\begin{itemize}
\item Case $\cvee$-int. Suppose that we have the following derivation  in $\mathcal{LE}_p$: 
\begin{prooftree}
\AxiomC{$\Gamma$}
\noLine
\UnaryInfC{$\Pi$}
\noLine
\UnaryInfC{$\Delta,A,B;\cdot$}
\RightLabel{$\cvee$-int}
\UnaryInfC{$\Delta;A\cvee B $}
\end{prooftree}
By the inductive hypothesis, $\Gamma , \neg A,\neg B,\neg \Delta  \vN\bot$. We can then take the desired derivation to be:
\begin{prooftree}
\AxiomC{$\Gamma\quad \neg \Delta\quad  [\neg A]^{n}\quad  [\neg B]^{m}$}
\noLine
\UnaryInfC{$\Pi'$}
\noLine
\UnaryInfC{$\bot$}
\LeftLabel{$n,m$}
\RightLabel{$\cvee$-int}
\UnaryInfC{$A\cvee B $}
\end{prooftree}

On the other hand, suppose that $ \Gamma , \neg \Delta \vN A\cvee B$  with proof
\begin{prooftree}
\AxiomC{$\Gamma\quad \neg \Delta\quad  [\neg A] \quad [\neg B]$}
\noLine
\UnaryInfC{$\Pi$}
\noLine
\UnaryInfC{$\bot$}
\RightLabel{$\cvee$-int}
\UnaryInfC{$A\cvee B $}
\end{prooftree}
By the inductive hypothesis, we have $\Gamma \vL \Delta,A,B;\cdot$. We can then take the desired derivation to be:
\begin{prooftree}
\AxiomC{$\Gamma$}
\noLine
\UnaryInfC{$\Pi$}
\noLine
\UnaryInfC{$\Delta,A,B;\cdot$}
\RightLabel{$\cvee$-int}
\UnaryInfC{$\Delta;A\cvee B $}
\end{prooftree}

\item Case $\cvee$-elim.
Suppose that $\Gamma_{1}, \Gamma_{2}, \Gamma_{3} \vL\Delta_{1},\Delta_{2},\Delta_{3};\cdot$  with proof 
\begin{prooftree}
\AxiomC{$\Gamma_{1}$}
\noLine
\UnaryInfC{$\Pi_1$}
\noLine
\UnaryInfC{$\Delta_1; A \cvee B$}
\AxiomC{$\Gamma_{2}$}
\AxiomC{$[\cdot;A]$}
\noLine
\BinaryInfC{$\Pi_2$}
\noLine
\UnaryInfC{$\Delta_{2};\cdot$}
\AxiomC{$\Gamma_{3}$}
\AxiomC{$[\cdot;B]$}
\noLine
\BinaryInfC{$\Pi_3$}
\noLine
\UnaryInfC{$\Delta_{3} ;\cdot$}
\RightLabel{$\cvee$-elim}
\TrinaryInfC{$\Delta_{1},\Delta_{2},\Delta_{3};\cdot$}
\end{prooftree}
By the inductive hypothesis we have: $\Gamma_{1}, \neg \Delta_1 \vN A \cvee B$, $\Gamma_{2}, \neg \Delta_2, A\vN \bot$ and 
$\Gamma_{3}, \neg \Delta_3,B\vN \bot$. We can the obtain the desired derivation as follows:
\begin{prooftree}
\AxiomC{$\Gamma_{1}$}
\AxiomC{$\neg\Delta_1$}
\noLine
\BinaryInfC{$\Pi_1'$}
\noLine
\UnaryInfC{$A \cvee B$}
\AxiomC{$\Gamma_{2}$}
\AxiomC{$\neg \Delta_2$}
\AxiomC{$[A]^{n}$}
\noLine
\TrinaryInfC{$\Pi_2'$}
\noLine
\UnaryInfC{$\bot$}
\LeftLabel{$n$}
\RightLabel{$\neg$-int}
\UnaryInfC{$\neg A$}
\AxiomC{$\Gamma_ {3}$}
\AxiomC{$\neg\Delta_3$}
\AxiomC{$[B]^{m}$}
\noLine
\TrinaryInfC{$\Pi_3'$}
\noLine
\UnaryInfC{$\bot$}
\LeftLabel{$m$}
\RightLabel{$\neg$-int}
\UnaryInfC{$\neg B$}
\RightLabel{$\cvee$-elim}
\TrinaryInfC{$\bot$}
\end{prooftree}
On the other hand, suppose that $\Gamma_{1}, \Gamma_{2}, \Gamma_{3}, \neg [\Delta_1],\neg [\Delta_2],\neg [\Delta_3]\vN \bot$  with the following derivation:
\begin{prooftree}
\AxiomC{$\Gamma_{1}$}
\AxiomC{$\neg \Delta_1$}
\noLine
\BinaryInfC{$\Pi_1$}
\noLine
\UnaryInfC{$A \cvee B$}
\AxiomC{$\Gamma_{2}$}
\AxiomC{$\neg \Delta_2$}
\noLine
\BinaryInfC{$\Pi_2$}
\noLine
\UnaryInfC{$\neg A$}
\AxiomC{$\Gamma_{3}$}
\AxiomC{$\neg \Delta_3$}
\noLine
\BinaryInfC{$\Pi_3$}
\noLine
\UnaryInfC{$\neg B$}
\RightLabel{$\cvee$-elim}
\TrinaryInfC{$\bot$}
\end{prooftree}
By the  inductive hypothesis: $\Gamma_{1}, \vL \Delta_1;A\cvee B$, $\Gamma_{2}, \vL \Delta_2;\neg A$ and $\Gamma_{3} \vL \Delta_3;\neg B$. We can then construct our desired derivation as:
\begin{prooftree}
\AxiomC{$\Gamma_{1}$}
\noLine
\UnaryInfC{$\Pi_1'$}
\noLine
\UnaryInfC{$\Delta_1; A \cvee B$}
\AxiomC{$\Gamma_{2}$}
\noLine
\UnaryInfC{$\Pi_2'$}
\noLine
\UnaryInfC{$\Delta_{2};\neg A$}
\AxiomC{$[\cdot; A]$}
\RightLabel{$\neg$-int}
\BinaryInfC{$\Delta_{2};\cdot$}
\AxiomC{$\Gamma_3$}
\noLine
\UnaryInfC{$\Pi_3'$}
\noLine
\UnaryInfC{$\Delta_{3};\neg B$}
\AxiomC{$[\cdot; B]$}
\RightLabel{$\neg$-int}
\BinaryInfC{$\Delta_{2};\cdot$}
\RightLabel{$\cvee$-elim}
\TrinaryInfC{$\Delta_{1},\Delta_{2},\Delta_{3};\cdot$}
\end{prooftree}

\item Case $\cimp$-int.
Suppose that $\Gamma \vL\Delta;A\cimp B$  with a derivation as:
\begin{prooftree}
\AxiomC{$\Gamma$}
\AxiomC{$[\cdot ;A]$}
\noLine
\BinaryInfC{$\Pi$}
\noLine
\UnaryInfC{$\Delta ,B;\cdot$}
\RightLabel{$\to_{c}$-int}
\UnaryInfC{$\Delta ;A \to_{c} B$}
\end{prooftree}
By the inductive hypothesis, we have $ \Gamma , A,\neg B,\neg \Delta \vN\bot$. We can then construct our desired derivation in $\mathcal{NE}_{p}$ as:

\begin{prooftree}
\AxiomC{$\Gamma$}
\AxiomC{$[A]$}
\AxiomC{$[\neg B]$}
\AxiomC{$\neg\Delta$}
\noLine
\QuaternaryInfC{$\Pi'$}
\noLine
\UnaryInfC{$\bot$}
\RightLabel{$\cimp$-int}
\UnaryInfC{$A\cimp B $}
\end{prooftree}
In the other direction, suppose that $ \Gamma ,\neg \Delta  \vN A\cimp B$ with a derivation as:
\begin{prooftree}
\AxiomC{$\Gamma$}
\AxiomC{$\neg \Delta$}
\AxiomC{$[A]$}
\AxiomC{$[\neg B]$}
\noLine
\QuaternaryInfC{$\Pi$}
\noLine
\UnaryInfC{$\bot$}
\RightLabel{$\cimp$-int}
\UnaryInfC{$A\cimp B $}
\end{prooftree}
By the inductive hypothesis, we have $ \Gamma\cup\{  \cdot ; A \} \vL \Delta ,B;\cdot$. Thus, we can obtain the desired derivation in $\mathcal{LE}_{p}$ as:
\begin{prooftree}
\AxiomC{$\Gamma$}
\AxiomC{$[\cdot;A]$}
\noLine
\BinaryInfC{$\Pi'$}
\noLine
\UnaryInfC{$\Delta,B;\cdot$}
\RightLabel{$\cvee$-elim}
\UnaryInfC{$\Delta;A\cimp B $}
\end{prooftree}
\item Case $\cimp$-elim.
Suppose that $ \Gamma_{1}, \Gamma_{2}, \Gamma_{3}  \vL\Delta_{1},\Delta_{2},\Delta_{3};\cdot$   with a derivation as:
\begin{prooftree}
\AxiomC{$\Gamma_{1}$}
\noLine
\UnaryInfC{$\Pi_1$}
\noLine
\UnaryInfC{$\Delta_1; A \to_{c} B$}
\noLine
\AxiomC{$\Gamma_{2}$}
\noLine
\UnaryInfC{$\Pi_2$}
\noLine
\UnaryInfC{$\Delta_{2};A$}
\AxiomC{$\Gamma_{3}$}
\AxiomC{$[\cdot;B]$}
\noLine
\BinaryInfC{$\Pi_3$}
\noLine
\UnaryInfC{$\Delta_{3} ;\cdot$}
\RightLabel{$\to_{c}$-elim}
\TrinaryInfC{$\Delta_{1},\Delta_{2},\Delta_{3};\cdot$}
\end{prooftree}
By inductive hypothesis, we have $ \Gamma_{1} ,\neg \Delta_1  \vN A \to_{c} B$, $ \Gamma_{2}, \neg \Delta_2 \vN A$ and 
$ \Gamma_{3}, \neg\Delta_3 ,B \vN \bot$. Then, we can obtain the desired derivation in $\mathcal{NE}_{p}$ as:
\begin{prooftree}
\AxiomC{$\Gamma_{1}$}
\AxiomC{$\neg \Delta_1$}
\noLine
\BinaryInfC{$\Pi_1'$}
\noLine
\UnaryInfC{$A \to_{c} B$}
\AxiomC{$\Gamma_{2}$}
\AxiomC{$\neg \Delta_2$}
\noLine
\BinaryInfC{$\Pi_2'$}
\noLine
\UnaryInfC{$A$}
\AxiomC{$\Gamma_{3}$}
\AxiomC{$\neg \Delta_3 \quad[B]$}
\noLine
\BinaryInfC{$\Pi_3'$}
\noLine
\UnaryInfC{$\bot$}
\RightLabel{$\neg$-int}
\UnaryInfC{$\neg B$}
\RightLabel{$\to_{c}$-elim}
\TrinaryInfC{$\bot$}
\end{prooftree}
In the other direction, suppose that $ \Gamma_{1}, \Gamma_{2}, \Gamma_{3}, \neg \Delta_1,\neg \Delta_2,\neg \Delta_3  \vN \bot$   with a derivation as:
\begin{prooftree}
\AxiomC{$\Gamma_{1}$}
\AxiomC{$\neg \Delta_1$}
\noLine
\BinaryInfC{$\Pi_1$}
\noLine
\UnaryInfC{$A \cimp B$}
\AxiomC{$\Gamma_{2}$}
\AxiomC{$\neg \Delta_2$}
\noLine
\BinaryInfC{$\Pi_2$}
\noLine
\UnaryInfC{$A$}
\AxiomC{$\Gamma_{3}$}
\AxiomC{$\neg \Delta_3$}
\noLine
\BinaryInfC{$\Pi_3$}
\noLine
\UnaryInfC{$\neg B$}
\RightLabel{$\cimp$-elim}
\TrinaryInfC{$\bot$}
\end{prooftree}
By the inductive hypothesis, we have that $\Gamma_{1} \vL \Delta_1;A\cimp B$, $\Gamma_{2} \vL \Delta_2; A$ and $\Gamma_{3} \vL \Delta_3;\neg B$. We can then  obtain the desired derivation in $\mathcal{LE}_{p}$ as:

\begin{prooftree}
\AxiomC{$\Gamma_{1}$}
\noLine
\UnaryInfC{$\Pi_1'$}
\noLine
\UnaryInfC{$\Delta_1; A \cimp B$}
\AxiomC{$\Gamma_{2}$}
\noLine
\UnaryInfC{$\Pi_2'$}
\noLine
\UnaryInfC{$\Delta_{2};A$}
\AxiomC{$\Gamma_{3}$}
\noLine
\UnaryInfC{$\Pi_3'$}
\noLine
\UnaryInfC{$\Delta_{3};\neg B$}
\AxiomC{$[\cdot; B]$}
\RightLabel{$\neg$-int}
\BinaryInfC{$\Delta_{2};\cdot$}
\RightLabel{$\cvee$-elim}
\TrinaryInfC{$\Delta_{1},\Delta_{2},\Delta_{3};\cdot$}
\end{prooftree}

\item Case der. Consider the derivation
\begin{prooftree}
\AxiomC{$\Gamma$}
\noLine
\UnaryInfC{$\Pi$}
\noLine
\UnaryInfC{$\Delta ;A$}
\RightLabel{der}
\UnaryInfC{$\Delta ,A;\cdot $}
\end{prooftree}
By the  inductive hypothesis, we have $ \Gamma ,\neg \Delta \vN A$. We can then obtain the desired derivation in $\mathcal{NE}_{p}$ as:

\begin{prooftree}
\AxiomC{$\Gamma$}
\AxiomC{$\neg\Delta$}
\noLine
\BinaryInfC{$\Pi'$}
\noLine
\UnaryInfC{$A$}
\AxiomC{$[\neg A]$}
\RightLabel{$\neg$-int}
\BinaryInfC{$\bot $}
\end{prooftree}
The cases of the other structural rules are trivial.
\end{itemize}
\end{proof}

\section{Normalization}
We will now describe how normalization works in the natural deduction with {\em stoup} setting.
The idea follows the usual one for natural deduction systems: show how to {\em compose} derivations, so to eliminate detours. The presence of {\em stoups}, however, adds an extra case analysis, since the composition may occur in the {\em stoup} or in the classical context. Both processes will be carefully described in what follows.
\subsection{Composition}
Before we define reductions and prove the normalization theorem for $\mathcal{LE}_{p}$, we must guarantee that the process of \textit{composition} of derivations is preserved in $\mathcal{LE}_{p}$. As quickly mentioned in Sec.~\ref{sec:sc}, sequent systems with {\em stoup} usually allow for two types of cut: a cut where the left cut-formula is in the \textit{stoup} and a cut where the left cut-formula is in the classical region. These two types of cut will correspond to two modes of composition: a composition that occurs in the \textit{stoup} and a composition that occurs in the classical context. We will detail these two forms of compositions below.
\begin{enumerate}
\item Composition in the \textit{stoup}.
\begin{theorem}\label{thm:comp-stoup}
Let $\Pi_{1}$ be a derivation of $\Gamma_{1} \vL \Delta_{1} ; A$ and $\Pi_{2}$ be a derivation of $\Gamma_{2} \cup \{\cdot ;A\} \vL \Delta_{2} ;B$. Then, the result of replacing the assumption $\cdot ;A$ in $\Pi_{2}$ by the derivation $\Pi_{1}$ is a derivation $\Pi$ of $\Gamma_{1} , \Gamma_{2} \vL \Delta_{1}  , \Delta_{2}; B$. 
\end{theorem}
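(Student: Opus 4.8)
The plan is to induct on the structure (equivalently, the height) of $\Pi_2$, since it is into the leaves of $\Pi_2$ that $\Pi_1$ is grafted, and it is through the rules of $\Pi_2$ that the extra classical context $\Delta_1$ supplied by $\Pi_1$ must be carried down to the root. At every leaf of $\Pi_2$ labelled by the designated assumption $\cdot;A$ I paste an unchanged copy of $\Pi_1$; thus below each such leaf the open hypotheses gain $\Gamma_1$ and the context gains one copy of $\Delta_1$, and the whole argument reduces to verifying, rule by rule, that these contributions accumulate to exactly $\Gamma_1,\Gamma_2$ and $\Delta_1,\Delta_2$ at the conclusion, with the stoup $B$ left untouched.

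For the base case, if $\Pi_2$ is the single leaf $\cdot;A$ then $\Gamma_2=\varnothing$, $\Delta_2$ is empty and $B=A$, so the replacement is literally $\Pi_1$, which derives $\Gamma_1\vL\Delta_1;A$, as required. For the inductive step I case on the last rule of $\Pi_2$. The rules that merely transport the context --- $\vee_i$-int, $\wedge$-elim, \textsf{der} and the structural rules $W_i,W_c,C_c$ --- together with the introductions that discharge a local assumption --- $\to_i$-int, $\to_c$-int and $\neg$-int --- are routine: I apply the induction hypothesis to the single premise and reapply the rule, and the freshly inserted $\Delta_1$ rides through unchanged. For the three discharging introductions I need only observe that the designated $\cdot;A$ is an open assumption of $\Pi_2$, hence (after the usual renaming of assumptions) distinct from the locally discharged one, so the discharge is unaffected by the grafting.

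The delicate cases are the multiplicative rules with several premises --- $\to_i$-elim, $\wedge$-int, $\neg$-elim, $\to_c$-elim, $\vee_i$-elim and $\vee_c$-elim --- whose conclusion merges the contexts of all premises. Here $\cdot;A$ may occur in more than one premise at once; applying the induction hypothesis to each premise that uses it inserts a copy of $\Delta_1$ into that premise's context, so reapplying the rule yields a context of the shape $\Delta_1,\dots,\Delta_1,\Delta_2$ carrying one copy of $\Delta_1$ per premise that used the assumption. I expect this duplication of $\Delta_1$ to be the main obstacle, and it is precisely what the classical contraction rule $C_c$ is there to repair: applying $C_c$ once for each formula of $\Delta_1$ collapses the surplus copies and leaves the required context $\Delta_1,\Delta_2$. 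On the hypothesis side no repair is needed, since open assumptions form a set and the several copies of $\Gamma_1$ coincide, giving $\Gamma_1,\Gamma_2$. This exhausts the cases and closes the induction.
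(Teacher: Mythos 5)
Your proposal is correct and follows essentially the same route as the paper: induction on $\Pi_2$, grafting $\Pi_1$ at the designated leaves, with the key observation that multi-premise (multiplicative) elimination rules may duplicate $\Delta_1$ and that repeated applications of $C_c$ repair this --- exactly the mechanism the paper illustrates on its $\vee_c$-elim case, with dereliction and the remaining rules handled as routine transport. No gaps to report.
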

\begin{proof}
By induction on the length of $\Pi_{2}$. We will examine two cases, the other cases being treated in a similar way.\\
\begin{enumerate}
\item The last rule applied in $\Pi_{2}$ is $\vee_{c}$-elim:
\begin{prooftree}
\AxiomC{$\Gamma_{2}^{1}$}
\AxiomC{$\cdot ;A$}
\noLine
\BinaryInfC{$\Pi_{2}^1$}
\noLine
\UnaryInfC{$\Delta_{2}^{1};C \vee_{c} D$}
\AxiomC{$[\cdot ;C]$}
\AxiomC{$\cdot ;A$}
\AxiomC{$\Gamma_{2}^{2}$}
\noLine
\TrinaryInfC{$\Pi_{2}^2$}
\noLine
\UnaryInfC{$\Delta_{2}^{2} ;\cdot$}
\AxiomC{$[\cdot ;D]$}
\AxiomC{$\cdot ;A$}
\AxiomC{$\Gamma_{2}^{3}$}
\noLine
\TrinaryInfC{$\Pi_{2}^3$}
\noLine
\UnaryInfC{$\Delta_{2}^{3} ;\cdot$}
\RightLabel{$\vee_{c}$-elim}
\TrinaryInfC{$\Delta_{2} ; \cdot$}
\end{prooftree}
where $\Delta_2^{i}$ indicates a partition of $\Delta_2$, the same with $\Gamma_2$. 
By the induction hypothesis we obtain the following derivations:
\begin{prooftree}
\AxiomC{$\Gamma_{1}$}
\AxiomC{$\Gamma_{2}^{1}$}
\noLine
\BinaryInfC{$\Pi_{1}^{*}$}
\noLine
\UnaryInfC{$\Delta_{1} , \Delta_{2}^{1} ; (C \vee_{c} D)$}
\end{prooftree}

and

\begin{prooftree}
\AxiomC{$\Gamma_{1}$}
\AxiomC{$[\cdot ; C]$}
\AxiomC{$\Gamma_{2}^{2}$}
\noLine
\TrinaryInfC{$\Pi_{2}^{*}$}
\noLine
\UnaryInfC{$\Delta_{1} , \Delta_{2}^{2} ; \cdot$}
\end{prooftree}

and

\begin{prooftree}
\AxiomC{$\Gamma_{1}$}
\AxiomC{$[\cdot ; D]$}
\AxiomC{$\Gamma_{2}^{3}$}
\noLine
\TrinaryInfC{$\Pi_{3}^{*}$}
\noLine
\UnaryInfC{$\Delta_{1} , \Delta_{2}^{3} ; \cdot$}
\end{prooftree}

The resulting derivation $\Pi$ is:
\begin{prooftree}
\AxiomC{$\Gamma_{1}$}
\AxiomC{$\Gamma_{2}^{1}$}
\noLine
\BinaryInfC{$\Pi_{1}^{*}$}
\noLine
\UnaryInfC{$\Delta_{1} , \Delta_{2}^{1} ; (C \vee_{c} D)$}
\AxiomC{$\Gamma_{1}$}
\AxiomC{$[\cdot ; C]$}
\AxiomC{$\Gamma_{2}^{2}$}
\noLine
\TrinaryInfC{$\Pi_{2}^{*}$}
\noLine
\UnaryInfC{$\Delta_{1} , \Delta_{2}^{2} ; \cdot$}
\AxiomC{$\Gamma_{1}$}
\AxiomC{$[\cdot ; D]$}
\AxiomC{$\Gamma_{2}^{3}$}
\noLine
\TrinaryInfC{$\Pi_{3}^{*}$}
\noLine
\UnaryInfC{$\Delta_{1} , \Delta_{2}^{3} ; \cdot$}
\RightLabel{$\vee_{c}$-elim}
\TrinaryInfC{$\Delta_{1} , \Delta_{1} , \Delta_{1} , \Delta_{2} ; \cdot$}
\doubleLine
\RightLabel{$C_c$}
\UnaryInfC{$\Delta_{1} , \Delta_{2} ; \cdot$}
\end{prooftree}

\noindent Where the \textit{double line} indicates several applications of $C_c$.

\item The last rule applied in $\Pi_{2}$ is dereliction:

\begin{prooftree}
\AxiomC{$\Gamma_{2}$}
\AxiomC{$\cdot ;A$}
\noLine
\BinaryInfC{$\Pi_{2}$}
\noLine
\UnaryInfC{$\Delta_{2} ;B$}
\RightLabel{der}
\UnaryInfC{$\Delta_{2} ,B;\cdot $}
\end{prooftree}

By the induction hypothesis we obtain the following derivation:
\begin{prooftree}
\AxiomC{$\Gamma_{1}$}
\AxiomC{$\Gamma_{2}$}
\noLine
\BinaryInfC{$\Pi^{*}$}
\noLine
\UnaryInfC{$\Delta_{1} , \Delta_{2} ; B$}
\end{prooftree}

The resulting derivation $\Pi$ is then:

\begin{prooftree}
\AxiomC{$\Gamma_{1}$}
\AxiomC{$\Gamma_{2}$}
\noLine
\BinaryInfC{$\Pi^{*}$}
\noLine
\UnaryInfC{$\Delta_{1} , \Delta_{2} ; B$}
\RightLabel{der}
\UnaryInfC{$\Delta_{1} , \Delta_{2} , B; \cdot $}
\end{prooftree}

\end{enumerate}
\end{proof}
\item Composition in the \textit{context}.
\begin{theorem}
Let $\Pi_{1}$ be a derivation of $\Gamma_{1} \vL \Delta_{1};C$, where $A \in \Delta_{1}$, and $\Pi_{2}$ be a derivation of $ \Gamma_{2}\cup\{\cdot; A\} \vL \Delta_{2} ;B$. Then, the result of replacing the assumption $\cdot ;A$ in $\Pi_{2}$ by the derivation $\Pi_{1}$ is a derivation $\Pi$ of $\Gamma_{1} , \Gamma_{2} \vL \Delta_{1}^{*}  , \Delta_{2}, B;C$, where  $\Delta_{1}^{*}$ is obtained from $\Delta_{1}$ by means of the elimination of the occurrences of $A$.
\end{theorem}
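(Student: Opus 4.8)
The plan is to argue by induction on the structure of $\Pi_{1}$, tracking the occurrence(s) of $A$ inside the classical context $\Delta_{1}$ back to the rule applications that first place them there. The guiding intuition comes from the $\mathcal{NE}_{p}$ reading underlying the systems-equivalence theorem: an occurrence of $A$ in the context $\Delta_{1}$ of $\Pi_{1}$ corresponds to an \emph{assumption} $\neg A$ in the associated $\mathcal{NE}_{p}$ derivation, whereas the hypothesis $\cdot;A$ of $\Pi_{2}$ corresponds to an assumption $A$. Composition in the context is thus the operation that first turns $\Pi_{2}$ into a derivation witnessing $\neg A$ — in $\mathcal{LE}_{p}$, apply dereliction to $\Delta_{2};B$ to obtain $\Delta_{2},B;\cdot$ and then $\neg$-int discharging $\cdot;A$, yielding $\Gamma_{2}\vL\Delta_{2},B;\neg A$ — and then glues this onto every place where $\Pi_{1}$ uses $\neg A$, i.e.\ where $A$ is introduced into $\Delta_{1}$. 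This already explains why the conclusion keeps $C$ in the stoup and demotes $B$ into the context, so the construction cannot be a plain leaf substitution as in Theorem~\ref{thm:comp-stoup}.

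Concretely, I would push the composition upwards through $\Pi_{1}$. For every rule whose principal formula is not the tracked occurrence of $A$ (all logical rules, $W_{i}$, and the weakenings/contractions not acting on this $A$), the composition commutes with the rule: apply the induction hypothesis to the premise(s) and reapply the rule, collapsing the duplicated copies of $\Gamma_{1},\Delta_{1}$ with $C_{c}$ exactly as in the proof of Theorem~\ref{thm:comp-stoup}. Two cases are critical, corresponding to the two ways an $A$ can enter the classical context. If the occurrence is introduced by $W_{c}$, then in the $\mathcal{NE}_{p}$ picture the assumption $\neg A$ is vacuous, so the composition with $\Pi_{2}$ is discarded and we simply weaken in $\Delta_{2},B$ on the premise with $W_{c}$. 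If the occurrence is introduced by dereliction, then the premise of that der has stoup exactly $A$, say $\Gamma_{1}'\vL\Sigma;A$; here the stoup matches the hypothesis $\cdot;A$ of $\Pi_{2}$, so I invoke Theorem~\ref{thm:comp-stoup} on this premise and $\Pi_{2}$ to obtain $\Gamma_{1}',\Gamma_{2}\vL\Sigma,\Delta_{2};B$, and then re-apply der to get $\Gamma_{1}',\Gamma_{2}\vL\Sigma,\Delta_{2},B;\cdot$. This is precisely the original conclusion of the der step with the demoted $A$ replaced by $\Delta_{2},B$, so the remainder of $\Pi_{1}$ below this point can be reused unchanged.

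The observation that makes this last step legitimate is that the tracked $A$ survives all the way down to $\Delta_{1}$, and is therefore never the active formula of a $\cvee$-int, $\cimp$-int or $\neg$-int below the splice point; consequently every rule application below stays well-formed after $A$ is replaced by the fresh context material $\Delta_{2},B$, which simply rides along into the final context. The main obstacle is therefore not any single case but the bookkeeping for \emph{several} occurrences of $A$: they may be introduced at distinct der/$W_{c}$ sites and later merged by $C_{c}$, so one must either peel them off one at a time or, as I would prefer, process them simultaneously and close with a block of $C_{c}$ applications — again in the style of Theorem~\ref{thm:comp-stoup} — to collapse the resulting context to $\Delta_{1}^{*},\Delta_{2},B$. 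The remaining structural and logical cases are routine and dealt with exactly as before; as a sanity check, translating $\Pi_{1}$ and $\Pi_{2}$ through the systems-equivalence theorem and performing the substitution of $\neg A$ in $\mathcal{NE}_{p}$ yields the same endsequent $\Gamma_{1},\Gamma_{2}\vL\Delta_{1}^{*},\Delta_{2},B;C$.
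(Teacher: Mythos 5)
Your proposal is correct and takes essentially the same route as the paper: an induction on $\Pi_{1}$ whose only non-routine case is dereliction, resolved by invoking the stoup-composition result (Theorem~\ref{thm:comp-stoup}) at the point where $A$ is demoted into the context, with $C_{c}$ closing off the duplicated material. The paper's proof is terser---it displays only the dereliction case and calls the rest straightforward---while your explicit handling of the $W_{c}$ origin of an occurrence and of several occurrences of $A$ merely fills in bookkeeping the paper leaves implicit.
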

\begin{proof}
The proof is by straightforward induction on the length of $\Pi_{1}$. The only exception is dereliction, which uses Theorem~\ref{thm:comp-stoup}.\\ \\
Let the last rule applied in $\Pi_{1}$ be \textit{dereliction}:

\begin{prooftree}
\AxiomC{$\Gamma_1$}
\noLine
\UnaryInfC{$\Pi_{1}$}
\noLine
\UnaryInfC{$\Delta_{1} ; A$}
\RightLabel{der}
\UnaryInfC{$\Delta_{1} , A; \cdot $}
\end{prooftree}
\end{proof}

By the induction hypothesis we can obtain a derivation $\Pi'$  of  $\Gamma_{1}, \Gamma_{2} \vL \Delta_{1}^{*} , \Delta_{2}, B;A$. By Theorem 2 we can obtain a derivation $\Pi''$ of 
$\Gamma_{1}, \Gamma_{2} \vL \Delta_{1}^{*} , \Delta_{2}, \Delta_{2}, B;B$. We can now take the derivation $\Pi$ to be:
\begin{prooftree}
\AxiomC{$\Gamma_{1}$}
\AxiomC{$\Gamma_{2}$}
\noLine
\BinaryInfC{$\Pi''$}
\noLine
\UnaryInfC{$\Delta_{1}^{*} , \Delta_{2}, \Delta_{2}, B;B$}
\RightLabel{$der$}
\UnaryInfC{$\Delta_{1}^{*} , \Delta_{2}, \Delta_{2}, B, B; \cdot$}
\doubleLine
\RightLabel{$C_{c}$}
\UnaryInfC{$\Delta_{1}^{*} , \Delta_{2}, B; \cdot$}
\end{prooftree}

\end{enumerate}

In what follows, we shall use the following notation to indicate composition in the \textit{stoup}  and composition  in the \textit{classical context}.
\begin{itemize}
\item Composition in the \textit{stoup}:
\begin{prooftree}
\AxiomC{$\Gamma_{1}$}
\noLine
\UnaryInfC{$\Pi_{1}$}
\noLine
\UnaryInfC{$\Delta_{1} ; [\frac{A}{\cdot ;A}]$}
\AxiomC{$\Gamma_{2}$}
\noLine
\BinaryInfC{$\Pi_{2}$}
\noLine
\UnaryInfC{$\Delta_{1} , \Delta_{2} ;B$}
\end{prooftree}

\item Composition in the \textit{classical context}:
\begin{prooftree}
\AxiomC{$\Gamma_{1}$}
\noLine
\UnaryInfC{$\Pi_{1}$}
\noLine
\UnaryInfC{$\Delta_{1} , [\frac{A}{\cdot ;A}]; B$}
\AxiomC{$\Gamma_{2}$}
\noLine
\BinaryInfC{$\Pi_{2}$}
\noLine
\UnaryInfC{$\Delta_{1} , \Delta_{2} ;B$}
\end{prooftree}

\item A more concise notation for composition in general: $\Pi_{1}/[\frac{A}{\cdot ;A}]/\Pi_{2}$.

\end{itemize}

\subsection{Reductions}
Derivations in $\mathcal{LE}_{p}$ may contain \textit{detours}. These detours are of two types: we may introduce a formula by an application of an introduction rule to immediately use it as major premiss of an application of an elimination rule; or we may introduce a formula by an application of an introduction rule and use it as major premiss of an application of an elimination rule after several applications of $\vee_{i}$-elim. The \textit{reductions} defined in this section are intended, as usual, to eliminate \textit{detours} that may occur in a derivation.
\begin{definition}
A {\em segment} in a derivation $\Pi$ is a sequence $A_{1},. . . , A_{n}$  of consecutive formulas  in a thread in $\Pi$ such that:
\begin{itemize}
\item $A_{1}$ is not in the stoup of the consequence of an application of $\vee_{i}$-elim or of an application of $C_{c}$;
\item $A_{j}$, for $j < n$, is in the stoup of the minor premiss of an application of $\vee_{i}$-elim or of an application of $C_{c}$; and
\item $A_{n}$ is not in the stoup of  the minor premiss of an application of $\vee_{i}$-elim or of an application of $C_{c}$.
\end{itemize}
\end{definition}
We note  the presence of contraction in the last definition. The idea is that contractions move down on reductions, just like in a sequent calculus' cut-elimination process.

\begin{definition}
A segment that begins with with the consequence of an application of an introduction rule or $W_{i}$ and ends with an application of an elimination rule is called a {\em maximal segment}. A maximal segment of length $1$ is called a {\em maximum formula}.
\end{definition}

\begin{definition}
Let $\Pi$ be a derivation in $\mathcal{LE}_{p}$. The {\em degree of $\Pi$}, $d[\Pi ]$, is defined as $\max\{d[A]:$ A is the end-formula of maximal segment in $\Pi \}$, where $d[A]$ is the {\em weight} of the formula $A$, defined inductively by
$$
\begin{array}{lcl}
d[\bot]=d[p] &=& 0 \quad p\mbox{ atomic.}\\
d[A \circ B] &=& d[A] + d[B] + 1 \mbox{ for } \circ\in\{\to_{i,c},\vee_{i,c},\wedge\}\\
d[\neg A] &=& d[A] +1.
\end{array}
$$
\end{definition}

\begin{definition}
A derivation $\Pi$ is called {\em normal} if and only if $d[\Pi ] = 0$.
\end{definition}
We will present next all the reduction steps in $\mathcal{LE}_{p}$ that will be used in the elimination of maximal segments.
\begin{enumerate}
\item $\wedge$-reduction:\\

The derivation
\begin{prooftree}
\AxiomC{$\Gamma_{1}$}
\noLine
\UnaryInfC{$\Delta_{1} ;A_{1}$}
\AxiomC{$\Gamma_{2}$}
\noLine
\UnaryInfC{$\Delta_{2} ;A_{2}$}
\RightLabel{$\wedge$-int}
\BinaryInfC{$\Delta_{1} , \Delta_{2} ; (A_{1} \wedge A_{2})$}
\RightLabel{$\wedge_j$-elim}
\UnaryInfC{$\Delta_{1} , \Delta_{2} ;A_{j}$}
\end{prooftree}

Reduces to
\begin{prooftree}
\AxiomC{$\Gamma_{j}$}
\noLine
\UnaryInfC{$\Delta_{j} ;A_{j}$}
\RightLabel{$W_c$}
\doubleLine
\UnaryInfC{$\Delta_{1} , \Delta_{2} ; A_{j}$}
\end{prooftree}

\item $\to_{i}$-reduction:\\

The derivation
\begin{prooftree}
\AxiomC{$\Gamma_{1}$}
\noLine
\UnaryInfC{$\Pi_{1}$}
\noLine
\UnaryInfC{$\Delta_{1} ; A$}
\AxiomC{$[\cdot ;A]$}
\AxiomC{$\Gamma_{2}$}
\noLine
\BinaryInfC{$\Pi_{2}$}
\noLine
\UnaryInfC{$\Delta_{2} ; B$}
\RightLabel{$\iimp$-int}
\UnaryInfC{$\Delta_{2} ; (A \to_{i} B)$}
\RightLabel{$\iimp$-elim}
\BinaryInfC{$\Delta_{1} , \Delta_{2} ;B$}
\end{prooftree}

Reduces to:
\begin{prooftree}
\AxiomC{$\Gamma_{1}$}
\noLine
\UnaryInfC{$\Pi_{1}$}
\noLine
\UnaryInfC{$\Delta_{1} ; [\frac{A}{\cdot ;A}]$}
\AxiomC{$\Gamma_{2}$}
\noLine
\BinaryInfC{$\Pi_{2}$}
\noLine
\UnaryInfC{$\Delta_{1} , \Delta_{2} ; B$}
\end{prooftree}
Observe that the case for negation is analogous.

\item $\vee_{i}$-reduction:\\

The derivation

\begin{prooftree}
\AxiomC{$\Gamma_{1}$}
\noLine
\UnaryInfC{$\Pi_{1}$}
\noLine
\UnaryInfC{$\Delta_{1} ;A_{j}$}
\RightLabel{$\ivee$-int}
\UnaryInfC{$\Delta_{1} ; (A_{1} \vee_{i} A_{2})$}
\AxiomC{$\Gamma_{2}$}
\AxiomC{$[\cdot ;A_{1}]$}
\noLine
\BinaryInfC{$\Pi_{2}$}
\noLine
\UnaryInfC{$\Delta_{2} ; B$}
\AxiomC{$\Gamma_{3}$}
\AxiomC{$[\cdot ;A_{2}]$}
\noLine
\BinaryInfC{$\Pi_{3}$}
\noLine
\UnaryInfC{$\Delta_{3} ; B$}
\RightLabel{$\ivee$-elim}
\TrinaryInfC{$\Delta_{1} , \Delta_{2} ,\Delta_{3} ; B$}

\end{prooftree}

Reduces to:
\begin{prooftree}
\AxiomC{$\Gamma_{1}$}
\noLine
\UnaryInfC{$\Pi_{1}$}
\noLine
\UnaryInfC{$\Delta_{1} ; [\frac{A_j}{\cdot ;A_{j}}]$}
\AxiomC{$\Gamma_{j}$}
\noLine
\BinaryInfC{$\Pi_{j}$}
\noLine
\UnaryInfC{$\Delta_{1} , \Delta_{j} ; B$}
\doubleLine
\RightLabel{$W_c$}
\UnaryInfC{$\Delta_{1} , \Delta_{2}, \Delta_{3} ; B$}
\end{prooftree}

\item $\to_{c}$-reduction:\\

The derivation
\begin{prooftree}
\AxiomC{$[\cdot ;A]$}
\AxiomC{$\Gamma_{1}$}
\noLine
\BinaryInfC{$\Pi_{1}$}
\noLine
\UnaryInfC{$\Delta_{2} , B; \cdot$}
\RightLabel{$\cimp$-int}
\UnaryInfC{$\Delta_{2} ; (A \to_{c} B)$}
\AxiomC{$\Gamma_2$}
\noLine
\UnaryInfC{$\Pi_2$}
\noLine
\UnaryInfC{$\Delta_{2};A$}
\AxiomC{$[\cdot;B]$}
\AxiomC{$\Gamma_3$}
\noLine
\BinaryInfC{$\Pi_3$}
\noLine
\UnaryInfC{$\Delta_{3} ;\cdot$}
\RightLabel{$\to_{c}$-elim}
\TrinaryInfC{$\Delta_{1},\Delta_{2},\Delta_{3};\cdot$}
\end{prooftree}

Reduces to:

\begin{prooftree}
\AxiomC{$\Gamma_2$}
\noLine
\UnaryInfC{$\Pi_2$}
\noLine
\UnaryInfC{$\Delta_{2};[\frac{A}{\cdot ;A}]$}
\AxiomC{$\Gamma_{1}$}
\noLine
\BinaryInfC{$\Pi_{1}$}
\noLine
\UnaryInfC{$\Delta_{1} , \Delta_{2} , [\frac{B}{\cdot ;B}]; \cdot$}
\AxiomC{$\Gamma_{3}$}
\noLine
\BinaryInfC{$\Pi_{3}$}
\noLine
\UnaryInfC{$\Delta_{1} , \Delta_{2}^{*} , \Delta_{3} ; \cdot$}
\doubleLine
\RightLabel{$W_{c}$}
\UnaryInfC{$\Delta_{1} , \Delta_{2}, \Delta_{3} ; \cdot$}
\end{prooftree}

\item $\vee_{c}$-reduction:\\

The derivation

\begin{prooftree}
\AxiomC{$\Gamma_{1}$}
\noLine
\UnaryInfC{$\Pi_{1}$}
\noLine
\UnaryInfC{$\Delta_{1} , A,B;\cdot$}
\RightLabel{$\cvee$-int}
\UnaryInfC{$\Delta_{1} ; (A \vee_{c} B)$}
\AxiomC{$\Gamma_{2}$}
\AxiomC{$[\cdot ;A]$}
\noLine
\BinaryInfC{$\Pi_{2}$}
\noLine
\UnaryInfC{$\Delta_{2} ; \cdot$}
\AxiomC{$\Gamma_{3}$}
\AxiomC{$[\cdot ;B]$}
\noLine
\BinaryInfC{$\Pi_{3}$}
\noLine
\UnaryInfC{$\Delta_{3} ; \cdot$}
\RightLabel{$\cvee$-elim}
\TrinaryInfC{$\Delta_{1} , \Delta_{2} ,\Delta_{3} ; \cdot$}

\end{prooftree}

Reduces to:

\begin{prooftree}
\AxiomC{$\Gamma_{1}$}
\noLine
\UnaryInfC{$\Pi_{1}$}
\noLine
\UnaryInfC{$\Delta_{1} , [\frac{A}{\cdot ;A}],B;\cdot$}
\AxiomC{$\Gamma_{2}$}
\noLine
\BinaryInfC{$\Pi_{2}$}
\noLine
\UnaryInfC{$\Delta_{1}^{*} , \Delta_{2} , [\frac{B}{\cdot ;B}];\cdot$}
\AxiomC{$\Gamma_{3}$}
\noLine
\BinaryInfC{$\Pi_{3}$}
\noLine
\UnaryInfC{$\Delta_{1}^{*} , \Delta_{2}^{*} , \Delta_{3};\cdot$}
\doubleLine
\RightLabel{$W_{c}$}
\UnaryInfC{$\Delta_{1}, \Delta_{2}, \Delta_{3};\cdot$}
\end{prooftree}

\item Permutative reductions:\\
\begin{enumerate}
\item The derivation

{\small
\begin{prooftree}
\AxiomC{$\Gamma_{1}$}
\noLine
\UnaryInfC{$\Pi_{1}$}
\noLine
\UnaryInfC{$\Delta_{1} ; (A \vee_{i} B)$}
\AxiomC{$[\cdot ;A]$}
\AxiomC{$\Gamma_{2}$}
\noLine
\BinaryInfC{$\Pi_{2}$}
\noLine
\UnaryInfC{$\Delta_{2} ; C$}
\AxiomC{$[\cdot ;B]$}
\AxiomC{$\Gamma_{3}$}
\noLine
\BinaryInfC{$\Pi_{3}$}
\noLine
\UnaryInfC{$\Delta_{3} ; C$}
\TrinaryInfC{$\Delta_{1} , \Delta_{2} , \Delta_{3} ; C$}
\AxiomC{$\Sigma_{1}$}
\noLine
\UnaryInfC{$\Theta_{1};\Lambda_{1}$}
\AxiomC{$\ldots$}
\AxiomC{$\Sigma_{m}$}
\noLine
\UnaryInfC{$\Theta_{m};\Lambda_{m}$}
\QuaternaryInfC{$\Delta_{1} , \Delta_{2} , \Delta_{3} , \Theta_{1} ,\ldots, \Theta_{m} ; \Lambda$}
\end{prooftree}
}
where $C$ is the major premiss of an elimination rule with minor premisses $\Theta_{1};\Lambda_{1} \ldots \Theta_{m};\Lambda_{m}$ (if any), reduces to \\

\resizebox{\textwidth}{!}{ 
$
\infer={\Delta_1,\Delta_2,\Delta_3,\Theta_1,\ldots,\Theta_m;\Lambda}
{\infer{\Delta_1,\Delta_2,\Delta_3,\Theta_1,\ldots,\Theta_m,\Theta_1,\ldots,\Theta_m;\Lambda}
{\deduce{\Delta_1;A\ivee B}
{\deduce{\vspace{0.1cm}\Pi_1}{\vspace{0.1cm}\Gamma_1}}&
\infer{\Delta_2,\Theta_1,\ldots,\Theta_m;\Lambda}
{\deduce{\Delta_2;C}{\deduce{\vspace{0.1cm}\Pi_2}
{\vspace{0.1cm}[\cdot;A]&\vspace{0.1cm}\Gamma_2}}&
\deduce{\Theta_1;\Lambda_1}{\vspace{0.1cm}\Sigma_1}&
\ldots&
\deduce{\Theta_m;\Lambda_m}{\vspace{0.1cm}\Sigma_m}
}&
\infer{\Delta_3,\Theta_1,\ldots,\Theta_m;\Lambda}
{\deduce{\Delta_3;C}{\deduce{\vspace{0.1cm}\Pi_3}
{\vspace{0.1cm}[\cdot;B]&\vspace{0.1cm}\Gamma_3}}&
\deduce{\Theta_1;\Lambda_1}{\vspace{0.1cm}\Sigma_1}&
\ldots&
\deduce{\Theta_m;\Lambda_m}{\vspace{0.1cm}\Sigma_m}
}}}
$
}

\item The derivation 
{\small
\begin{prooftree}
\AxiomC{$\Gamma_{1}$}
\noLine
\UnaryInfC{$\Pi_{1}$}
\noLine
\UnaryInfC{$\Delta_{1}, A, A;C$}
\RightLabel{$C_{c}$}
\UnaryInfC{$\Delta_{1}, A;C$}
\AxiomC{$\Sigma_{1}$}
\noLine
\UnaryInfC{$\Theta_{1};\Lambda_{1}$}
\AxiomC{$\ldots$}
\AxiomC{$\Sigma_{m}$}
\noLine
\UnaryInfC{$\Theta_{m};\Lambda_{m}$}
\QuaternaryInfC{$\Delta_{1} , \Delta_{2} , \Delta_{3} , \Theta_{1} ,\ldots, \Theta_{m} ; \Lambda$}
\end{prooftree}
}
where $C$ is the major premiss of an elimination rule with minor premisses $\Theta_{1};\Lambda_{1} \ldots \Theta_{m};\Lambda_{m}$ (if any), reduces to \\

{\small
\begin{prooftree}
\AxiomC{$\Gamma_{1}$}
\noLine
\UnaryInfC{$\Pi_{1}$}
\noLine
\UnaryInfC{$\Delta, A, A:C$}
\AxiomC{$\Sigma_{1}$}
\noLine
\UnaryInfC{$\Theta_{1};\Lambda_{1}$}
\AxiomC{$\ldots$}
\AxiomC{$\Sigma_{m}$}
\noLine
\UnaryInfC{$\Theta_{m};\Lambda_{m}$}
\QuaternaryInfC{$\Delta, A, A, \Theta_{1} ,\ldots, \Theta_{m}; \Lambda$}
\RightLabel{$C_{c}$}
\UnaryInfC{$\Delta , A,  \Theta_{1} ,\ldots, \Theta_{m}, A; \Lambda$}
\end{prooftree}
}

\end{enumerate}
\end{enumerate}

\subsection{Normalization}
We shall use Pottinger's \textit{critical derivation} strategy ~\cite{Pottinger76} to prove the normalization theorem for $\mathcal{LE}_{p}$. But before the proof of normalization, we need some definitions and preparatory lemmas that relate reductions and composition to the degree of derivations. The proof of the next lemma is obvious.

\begin{lemma}\label{lemma:junction}
Let $\Pi$ be $\Pi_{1} /[\frac{A}{\cdot ;A}]/\Pi_{2}$, the composition of derivations $\Pi_{1}$ with $\Pi_ {2}$ at junction point $A$. Then, $d[\Pi ] = max\{d[\Pi_{1} ] , d[\Pi_{2} ], d[A]\}$
\end{lemma}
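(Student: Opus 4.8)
The plan is to read off $d[\Pi]$ directly from the definition, namely as the maximum of $d[B]$ over all end-formulas $B$ of maximal segments of $\Pi$, and to classify these maximal segments into three families according to where they sit relative to the grafting point. Recall that the composition $\Pi_{1}/[\frac{A}{\cdot;A}]/\Pi_{2}$ is obtained by substituting, for each hypothesis leaf $\cdot;A$ of $\Pi_{2}$, the whole derivation $\Pi_{1}$ (whose conclusion is again $\cdot;A$ in the stoup), together with the auxiliary $W_{c}$/$C_{c}$ bookkeeping supplied by Theorem~\ref{thm:comp-stoup} in the classical-context case. So it suffices to show that the maximal segments of $\Pi$ are exactly those of $\Pi_{1}$, those of $\Pi_{2}$, and one possible new segment through the junction formula $A$.

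First I would verify that grafting preserves the maximal segments internal to each subderivation. A maximal segment is determined by an introduction (or $W_{i}$) at its top and an elimination consuming its end-formula at its bottom, possibly lengthened through intermediate minor premisses of $\ivee$-elim or $C_{c}$. Replacing a hypothesis leaf by a derivation concluding the same \emph{stp-c} creates no new introduction/elimination pairing inside $\Pi_{1}$ or inside $\Pi_{2}$, and destroys none either; hence every maximal segment of $\Pi_{1}$ and of $\Pi_{2}$ survives verbatim, contributing exactly $d[\Pi_{1}]$ and $d[\Pi_{2}]$ to the maximum. The extra contractions and weakenings inserted by the composition act only on the classical region and are neither the consequence of an introduction nor the major premiss of an elimination, so they lie outside every maximal segment and cannot raise the degree.

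The only genuinely new phenomenon occurs at the junction, where the conclusion $A$ of $\Pi_{1}$ is identified with the occurrence of $A$ that was the hypothesis $\cdot;A$ of $\Pi_{2}$, so a single thread now runs through this formula $A$. When $\Pi_{1}$ ends with an introduction and $A$ is the major premiss of an elimination in $\Pi_{2}$, this produces a maximal segment (a maximum formula, possibly extended upward or downward across adjacent $\ivee$-elim or $C_{c}$ steps) whose end-formula is $A$ and whose weight is therefore $d[A]$. This junction segment is the unique candidate not already counted, and its contribution is $d[A]$. Taking the maximum over the three families then yields $d[\Pi]=\max\{d[\Pi_{1}],d[\Pi_{2}],d[A]\}$.

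The main obstacle — the reason the statement is ``obvious'' only once the segment bookkeeping is pinned down — is precisely the verification that the auxiliary structural steps demanded by composition never open or close a maximal segment. Concretely, I would check that in both modes of composition (in the stoup, and in the classical context via Theorem~\ref{thm:comp-stoup} followed by \emph{der} and several $C_{c}$) the inserted $C_{c}$ and $W_{c}$ inferences fall strictly between, or outside of, the introduction/elimination endpoints that define segments, so that the set of maximal segments of $\Pi$ is exactly the announced union. Once this is granted, the equality is immediate from the definition of $d[\cdot]$.
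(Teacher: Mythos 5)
Your argument is correct in substance, but there is nothing in the paper to compare it against: the paper offers no proof of Lemma~\ref{lemma:junction} at all, introducing it with the sentence ``The proof of the next lemma is obvious.'' What you have written is precisely the bookkeeping that sentence sweeps under the rug: maximal segments of the composite $\Pi$ are those internal to $\Pi_{1}$, those internal to $\Pi_{2}$, plus the possible new segment through the junction occurrence of $A$, and the auxiliary $W_{c}$/$C_{c}$ steps inserted by composition never serve as the top (introduction or $W_i$) or bottom (major premiss of an elimination) of a segment, so they contribute nothing. One small point of honesty that you half-acknowledge and should make explicit: the junction segment exists only \emph{when} $\Pi_{1}$ ends in an introduction (or $W_i$) and the grafted occurrence of $A$ feeds, possibly through $\ivee$-elim minor premisses or $C_{c}$, the major premiss of an elimination in $\Pi_{2}$; otherwise $d[A]$ is not the end-formula of any maximal segment and the stated equality degrades to $d[\Pi]\leq\max\{d[\Pi_{1}],d[\Pi_{2}],d[A]\}$. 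This imprecision is in the lemma as the paper states it, not in your reasoning, and the inequality is all that the subsequent Lemma~\ref{lemma:reduction} and the Critical Lemma actually use, so nothing downstream breaks.
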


\begin {lemma} \label{lemma:reduction}
If $\Pi$ reduces to $\Pi'$, then $d[\Pi ] \leq d[\Pi']$.
\end{lemma}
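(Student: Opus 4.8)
The plan is to argue by a case analysis on the reduction step taking $\Pi$ to $\Pi'$, following the list in the previous subsection, and in each case to read off the degrees of both derivations from the contracted redex together with the immediate subderivations feeding into it. The workhorse is Lemma~\ref{lemma:junction}: every reduct that is presented as a composition $\Pi_1/[\frac{A}{\cdot\,;A}]/\Pi_2$ has degree exactly $\max\{d[\Pi_1],d[\Pi_2],d[A]\}$, so once a reduct is exhibited in composition form, computing $d[\Pi']$ amounts to inspecting the junction formula and the degrees of the component derivations. The $\wedge$-reduction is the one logical case whose reduct is not a composition but a subderivation followed by applications of $W_c$; there I would use that $W_c$ neither starts nor ends a maximal segment (by the definition of segment, only introduction rules and $W_i$ begin one), so the trailing weakenings contribute nothing to the degree, and the same observation disposes of the $W_c$ that closes the $\vee_i$-, $\to_c$- and $\vee_c$-reducts.

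Concretely, for each logical reduction I first identify the maximal segment that the step contracts — the maximum formula $A_1\wedge A_2$, $A\to_i B$, $A_1\vee_i A_2$, $A\to_c B$ or $A\vee_c B$ sitting at the junction of the offending intro/elim pair — and record its weight from the inductive clause for $d[\,\cdot\,]$. Then I express $d[\Pi']$ through Lemma~\ref{lemma:junction}, iterating it for the two reductions that create two junction points (the $\to_c$- and $\vee_c$-reductions, where both the minor premiss $A$ and the discharged $B$ become junctions). Comparing the two quantities settles the stated inequality in each of the cases $\wedge,\to_i,\vee_i,\to_c,\vee_c$; the negation case is subsumed, since the $\neg$ rules are the implication rules with empty stoup, so its reduction is the corresponding implication reduction specialised to $\Sigma=\cdot$.

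The genuinely delicate cases are the permutative reductions. In reduction~6(a) the trailing elimination whose major premiss is the conclusion $C$ of a $\vee_i$-elim is pushed up into both minor branches; this duplicates the minor-premiss subderivations $\Sigma_1,\dots,\Sigma_m$ and relocates the elimination inference, so no junction lemma applies directly and I would instead track maximal segments individually, verifying that the collection of segment end-formulas — and hence the weights controlling $d[\Pi]$ and $d[\Pi']$ — is governed by $C$, by $\Lambda$, and by the $\Sigma_k$ already present in $\Pi$. Reduction~6(b) is handled by the same inspection, with the contraction $C_c$ playing the structural role of the $\vee_i$-elim; this is precisely why the definition of segment treats $\vee_i$-elim and $C_c$ on the same footing.

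I expect the permutative steps to be the main obstacle: because they copy and rearrange subderivations rather than collapsing a clean intro/elim pair, one cannot simply invoke Lemma~\ref{lemma:junction}, and the degree comparison must be justified by a direct segment-by-segment bookkeeping confirming that the permutation neither introduces a segment whose end-formula carries a weight not already accounted for in $\Pi$ nor alters the weights attached to $C$ and $\Lambda$. The purely logical reductions, by contrast, are routine once the redex and its junction formulas are written out and Lemma~\ref{lemma:junction} is applied.
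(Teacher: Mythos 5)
Your proposal is correct and takes essentially the same route as the paper: the paper's entire proof reads ``Directly from the form of the reductions and Lemma~\ref{lemma:junction}'', and your case-by-case analysis (junction lemma for the compositional reducts, the observation that trailing $W_c$ applications neither start nor end maximal segments, and direct segment bookkeeping for the permutative cases) is precisely that argument spelled out. One remark: the inequality as printed in the statement is a typo --- what your computations actually establish, and what the paper needs when it invokes this lemma inside Lemma~\ref{lemma:critical}, is $d[\Pi'] \leq d[\Pi]$, i.e.\ reduction does not increase the degree.
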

\begin{proof}
Directly from the form of the reductions and Lemma~\ref{lemma:junction}.
\end{proof}
\begin{definition}
A derivation $\Pi$ is {\em critical} iff:
\begin{itemize}
\item $\Pi$ ends with an elimination rule $\alpha$;
\item The major premiss $A$ of $\alpha$ is the end of maximal segment;
\item $d[\Pi ]$ = $d[A]$; and
\item For every proper subderivation $\Pi'$ of $\Pi$, $d[\Pi' ]$ $<$ $d[\Pi ]$.
\end{itemize}
\end{definition}

\begin{lemma}\label{lemma:critical}
\textit{(\textbf{Critical Lemma}}): Let $\Pi$ be a critical derivation of $\Gamma\vL \Delta ; \Sigma$  in $\mathcal{LE}_{p}$. Then, $\Pi$ reduces to a derivation $\Pi'$ of $\Gamma'\vL\Delta ; \Sigma$ with $\Gamma ' \subseteq \Gamma$, such that $d[\Pi' ]$ $<$ $d[\Pi ]$.
\end{lemma}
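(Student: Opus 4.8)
The plan is to isolate the unique maximal segment $\sigma$ of degree $d[\Pi]$ and to erase it by driving the concluding elimination rule up to the rule that introduced the segment, and then contracting the resulting detour. Since $\Pi$ is critical, $\sigma$ is exactly the segment ending at the major premise $A$ of the final elimination $\alpha$, where $d[A]=d[\Pi]$: this $\sigma$ becomes a maximal segment only because $\alpha$ is the last rule of $\Pi$, so it lies in no proper subderivation, whereas every proper subderivation has degree $<d[\Pi]$ by hypothesis. Hence $\sigma$ is the only maximal segment of degree $d[\Pi]$ in $\Pi$, and the argument splits on the length of $\sigma$.

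If $\sigma$ has length $1$, its end-formula $A$ is at once the conclusion of an introduction rule and the major premise of $\alpha$ (the subcase where the segment is begun by $W_i$ is handled by weakening directly to the conclusion of $\alpha$ and discarding the minor premises). Here I apply the proper reduction dictated by the principal connective of $A$ --- one of the $\wedge$-, $\to_i$-, $\vee_i$-, $\to_c$- or $\vee_c$-reductions of the previous subsection, the $\neg$-case being analogous to $\to_i$. Each such reduction rewrites $\Pi$ as a composition (possibly iterated, possibly followed by $W_c$ and $C_c$) whose junction formulas are immediate subformulas $A'$ of $A$, so that $d[A'] < d[A] = d[\Pi]$. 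The derivations entering the composition are proper subderivations of $\Pi$, hence of degree $<d[\Pi]$, and the trailing structural rules place nothing new in the stoup, so create no maximal segment. By Lemma~\ref{lemma:junction} the composed derivation has degree $\max\{d[\Pi_1],d[\Pi_2],d[A']\} < d[\Pi]$. As the reductions only discard material or substitute derivations for assumptions, the surviving hypotheses satisfy $\Gamma' \subseteq \Gamma$ and the end \textit{stp-c} $\Delta;\Sigma$ is unchanged.

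If $\sigma$ has length $>1$, it reaches $A$ through one or more applications of $\vee_i$-elim or $C_c$ whose conclusion carries the segment-formula in the stoup. I would then apply the matching permutative reduction (6(a) for $\vee_i$-elim, 6(b) for $C_c$) to push $\alpha$ above the lowest such rule. Each step leaves the end-formula fixed but shortens the segment by one, and does not raise the degree (Lemma~\ref{lemma:reduction}); iterating drives $\alpha$ up to the introduction rule beginning $\sigma$, where the length-$1$ case applies and the degree finally drops. A permutation through $\vee_i$-elim copies $\alpha$ together with its minor-premise derivations $\Sigma_1,\dots,\Sigma_m$ into both branches, so the phase is organised as an induction on the total length of the degree-$d[\Pi]$ segments, each permutation replacing a segment by strictly shorter ones.

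The step I expect to be the main obstacle is precisely this permutation phase: I must verify that copying $\alpha$ into the branches of a $\vee_i$-elim creates no maximal segment of degree $d[\Pi]$ beyond the shortened copies of $\sigma$ that I am actively eliminating, and that it does not disturb the segments internal to the $\Sigma_i$. Criticality is what rescues the argument --- it forces every branch, and every minor-premise derivation $\Sigma_i$, to have degree $<d[\Pi]$, and since the $\Sigma_i$ feed $\alpha$ only as \emph{minor} premises they cannot form a new high-degree detour with it; the sole degree-$d[\Pi]$ detour is the segment-formula sitting as the \emph{major} premise of $\alpha$. Once the last such detour is contracted by the base case, no maximal segment of degree $d[\Pi]$ survives, so $\Pi$ reduces to a derivation $\Pi'$ of $\Gamma' \vL \Delta;\Sigma$ with $\Gamma' \subseteq \Gamma$ and $d[\Pi'] < d[\Pi]$, as required.
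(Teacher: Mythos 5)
Your proposal is correct and follows essentially the same route as the paper's proof: a case split on whether the major premiss of the final elimination is a maximum formula (handled by the proper reduction together with Lemma~\ref{lemma:junction} and the criticality bound on proper subderivations) or the end of a longer segment (handled by the permutative reductions for $\vee_i$-elim and $C_c$ followed by a recursive appeal to the shortened branches). The only differences are cosmetic --- you induct on the total length of the degree-$d[\Pi]$ segments where the paper inducts on the length of $\Pi$, and your claimed uniqueness of the maximal segment is not quite right (several segments can converge on the major premiss through the branches of a $\vee_i$-elim) but nothing in your argument depends on it.
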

\begin{proof}
By induction on the length of $\Pi$.
\begin{itemize}
\item Case 1: The major premiss of the last rule applied in $\Pi$ is a maximum formula. The result follows directly from the form of the reductions and Lemma~\ref{lemma:reduction}.
\item Case 2: The major premiss of the last rule applied in $\Pi$ is the end formula of maximum segment of length $>$1. There are two sub-cases to be examined:
\begin{enumerate}
\item $\Pi$ is:
{\small
\begin{prooftree}
\AxiomC{$\Gamma_{1}$}
\noLine
\UnaryInfC{$\Pi_{1}$}
\noLine
\UnaryInfC{$\Delta_{1} ; (A \vee_{i} B)$}
\AxiomC{$[\cdot ;A]$}
\AxiomC{$\Gamma_{2}$}
\noLine
\BinaryInfC{$\Pi_{2}$}
\noLine
\UnaryInfC{$\Delta_{2} ; C$}
\AxiomC{$[\cdot ;B]$}
\AxiomC{$\Gamma_{3}$}
\noLine
\BinaryInfC{$\Pi_{3}$}
\noLine
\UnaryInfC{$\Delta_{3} ; C$}
\TrinaryInfC{$\Delta_{1} , \Delta_{2} , \Delta_{3} ; C$}
\AxiomC{$\Sigma_{1}$}
\noLine
\UnaryInfC{$\Theta_{1};\Lambda_{1}$}
\AxiomC{$\ldots$}
\AxiomC{$\Sigma_{m}$}
\noLine
\UnaryInfC{$\Theta_{m};\Lambda_{m}$}
\QuaternaryInfC{$\Delta_{1} , \Delta_{2} , \Delta_{3} , \Theta_{1} ,\ldots, \Theta_{m} ; \Lambda$}
\end{prooftree}
}
By a permutative reduction, $\Pi$ reduces to the following derivation $\Pi^{*}$:\\

\resizebox{\textwidth}{!}{ 
$
\infer={\Delta_1,\Delta_2,\Delta_3,\Theta_1,\ldots,\Theta_m;\Lambda}
{\infer{\Delta_1,\Delta_2,\Delta_3,\Theta_1,\ldots,\Theta_m,\Theta_1,\ldots,\Theta_m;\Lambda}
{\deduce{\Delta_1;A\ivee B}
{\deduce{\vspace{0.1cm}\Pi_1}{\vspace{0.1cm}\Gamma_1}}&
\infer{\Delta_2,\Theta_1,\ldots,\Theta_m;\Lambda}
{\deduce{\Delta_2;C}{\deduce{\vspace{0.1cm}\Pi_2}
{\vspace{0.1cm}[\cdot;A]&\vspace{0.1cm}\Gamma_2}}&
\deduce{\Theta_1;\Lambda_1}{\vspace{0.1cm}\Sigma_1}&
\ldots&
\deduce{\Theta_m;\Lambda_m}{\vspace{0.1cm}\Sigma_m}
}&
\infer{\Delta_3,\Theta_1,\ldots,\Theta_m;\Lambda}
{\deduce{\Delta_3;C}{\deduce{\vspace{0.1cm}\Pi_3}
{\vspace{0.1cm}[\cdot;B]&\vspace{0.1cm}\Gamma_3}}&
\deduce{\Theta_1;\Lambda_1}{\vspace{0.1cm}\Sigma_1}&
\ldots&
\deduce{\Theta_m;\Lambda_m}{\vspace{0.1cm}\Sigma_m}
}}}
$
}
\\ \\

By Lemma 2, $d[\Pi^{*}] \leq d[\Pi ]$. If $d[\Pi^{*}] < d[\Pi ]$, then we take $\Pi' = \Pi^{*}$. If $d[\Pi^{*}] = d[\Pi ]$, then at least one of the derivations of the minor premisses has degree = $d[\Pi ]$. For the sake of the argument, let's assume that both have degree = $d[\Pi ]$. By the induction hypothesis, the derivation
\small{

\begin{prooftree}
\AxiomC{$[\cdot ;A]$}
\AxiomC{$\Gamma_{2}$}
\noLine
\BinaryInfC{$\Pi_{2}$}
\noLine
\UnaryInfC{$\Delta_{2} ; C$}
\AxiomC{$\Sigma_{1}$}
\noLine
\UnaryInfC{$\Theta_{1};\Lambda_{1}$}
\AxiomC{$\ldots$}
\AxiomC{$\Sigma_{m}$}
\noLine
\UnaryInfC{$\Theta_{m};\Lambda_{m}$}
\QuaternaryInfC{$\Delta_{2} , \Theta_{1} ,\ldots, \Theta_{m} ; \Lambda$}
\end{prooftree}
}

reduces to a derivation $\Pi_{2}'$ of $\Delta_{2} , \Theta_{1} , \ldots, \Theta_{m} ; \Lambda$ such that $d[\Pi_{2}']$ $<$ $d[\Pi_{2}]$, and the derivation

\begin{prooftree}
\AxiomC{$[\cdot ;B]$}
\AxiomC{$\Gamma_{3}$}
\noLine
\BinaryInfC{$\Pi_{3}$}
\noLine
\UnaryInfC{$\Delta_{3} ; C$}
\AxiomC{$\Sigma_{1}$}
\noLine
\UnaryInfC{$\Theta_{1};\Lambda_{1}$}
\AxiomC{$\ldots$}
\AxiomC{$\Sigma_{m}$}
\noLine
\UnaryInfC{$\Theta_{m};\Lambda_{m}$}
\QuaternaryInfC{$\Delta_{3} , \Theta_{1} ,\ldots, \Theta_{m} ; \Lambda$}
\end{prooftree}

Reduces to a derivation $\Pi_{3}'$ of $\Delta_{3} , \Theta_{1} ,\ldots, \Theta_{m} ; \Lambda$ such that $d[\Pi_{3}']$ $<$ $d[\Pi_{3}]$. \\

Let $\Pi'$ be:
\begin{prooftree}
\AxiomC{$\Gamma_{1}$}
\noLine
\UnaryInfC{$\Pi_{1}$}
\noLine
\UnaryInfC{$\Delta_{1} ; (A \vee_{i} B)$}
\AxiomC{$\Pi_{2}'$}
\noLine
\UnaryInfC{$\Delta_{2} , \Theta_{1} ,\ldots, \Theta_{m} ; \Lambda$}
\AxiomC{$\Pi_{3}'$}
\noLine
\UnaryInfC{$\Delta_{3} , \Theta_{1} ,\ldots, \Theta_{m} ; \Lambda$}
\TrinaryInfC{$\Delta_{1} , \Delta_{2} , \Delta_{3} , \Theta_{1} ,\ldots, \Theta_{m} ; \Lambda$}
\end{prooftree}
We can easily see that $\Pi$ reduces to $\Pi'$ and that $d[\Pi' ]$ $<$ $d[\Pi ]$. 

\item $\Pi$ is
{\small
\begin{prooftree}
\AxiomC{$\Pi_{1}$}
\noLine
\UnaryInfC{$\Delta_{1}, A, A:C$}
\RightLabel{$C_{c}$}
\UnaryInfC{$\Delta_{1}, A:C$}
\AxiomC{$\Sigma_{1}$}
\noLine
\UnaryInfC{$\Theta_{1};\Lambda_{1}$}
\AxiomC{$\ldots$}
\AxiomC{$\Sigma_{m}$}
\noLine
\UnaryInfC{$\Theta_{m};\Lambda_{m}$}
\QuaternaryInfC{$\Delta_{1} , \Delta_{2} , \Delta_{3} , \Theta_{1} ,\ldots, \Theta_{m} ; \Lambda$}
\end{prooftree}
}
By a permutative reduction, $\Pi$ reduces to the following derivation $\Pi^{*}$:\\
{\small
\begin{prooftree}
\AxiomC{$\Pi_{1}$}
\noLine
\UnaryInfC{$\Delta, A, A:C$}
\AxiomC{$\Sigma_{1}$}
\noLine
\UnaryInfC{$\Theta_{1};\Lambda_{1}$}
\AxiomC{$\ldots$}
\AxiomC{$\Sigma_{m}$}
\noLine
\UnaryInfC{$\Theta_{m};\Lambda_{m}$}
\QuaternaryInfC{$\Delta, A, A, \Theta_{1} ,\ldots, \Theta_{m}; \Lambda$}
\RightLabel{$C_{c}$}
\UnaryInfC{$\Delta , A,  \Theta_{1} ,\ldots, \Theta_{m}, A; \Lambda$}
\end{prooftree}
}
As in the previous case, by Lemma 2, $d[\Pi^{*}] \leq d[\Pi ]$. If $d[\Pi^{*}] < d[\Pi ]$, then we take $\Pi' = \Pi^{*}$. If $d[\Pi^{*}] = d[\Pi ]$, then by the induction hypothesis, the derivation
{\small
\begin{prooftree}
\AxiomC{$\Pi_{1}$}
\noLine
\UnaryInfC{$\Delta, A, A:C$}
\AxiomC{$\Sigma_{1}$}
\noLine
\UnaryInfC{$\Theta_{1};\Lambda_{1}$}
\AxiomC{$\ldots$}
\AxiomC{$\Sigma_{m}$}
\noLine
\UnaryInfC{$\Theta_{m};\Lambda_{m}$}
\QuaternaryInfC{$\Delta, A, A, \Theta_{1} ,\ldots, \Theta_{m}; \Lambda$}
\end{prooftree}
}
reduces to a derivation $\Pi^{**}$ of $\Gamma^{*} \subseteq \Gamma \vL \Delta, A, A, \Theta_{1} ,\ldots, \Theta_{m}; \Lambda$ such that $d[\Pi^{**}] < d[\Pi ]$. We can then take the desired derivation $\Pi'$ as
\begin{prooftree}
\AxiomC{$\Pi^{**}$}
\noLine
\UnaryInfC{$\Delta, A, A, \Theta_{1} ,\ldots, \Theta_{m}; \Lambda$}
\UnaryInfC{$\Delta, A, \Theta_{1} ,\ldots, \Theta_{m}; \Lambda$}
\end{prooftree}
\end{enumerate}
\end{itemize}
\end{proof}

\begin{lemma}\label{lemma:main}
Let $\Pi$ be a derivation of $\Gamma\vL \Delta ; \Sigma$ with $d[\Pi]>0$.
Then, $\Pi$ reduces to a derivation $\Pi'$ of $\Gamma'\vL \Delta ; \Sigma$ with $\Gamma ' \subseteq \Gamma$, such that $d[\Pi' ]$ $<$ $d[\Pi ]$. 
\end{lemma}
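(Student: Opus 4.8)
The plan is to obtain Lemma~\ref{lemma:main} as a direct consequence of the Critical Lemma (Lemma~\ref{lemma:critical}), by an outer induction on the number $n$ of maximal segments of $\Pi$ whose end-formula has weight $d \defs d[\Pi]$. Since $d>0$, the degree is witnessed by at least one such segment, so $n\geq 1$; the goal of one induction step is to erase one of these segments without raising the degree, and the goal of the whole lemma is to bring the degree strictly below $d$.

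First I would locate inside $\Pi$ a \emph{critical} subderivation of degree $d$. Choose a subderivation $\Xi$ of $\Pi$ that is minimal, in the subderivation ordering, among those with $d[\Xi]=d$; such a $\Xi$ exists because $\Pi$ itself qualifies. Note that every subderivation of $\Pi$ has degree at most $d$, since the conditions defining a maximal segment are local, so a maximal segment of a subderivation is also a maximal segment of $\Pi$; consequently every \emph{proper} subderivation of $\Xi$ has degree strictly below $d$, as degree $d$ would contradict minimality. As $d>0$, $\Xi$ contains a maximal segment whose end-formula $A$ has weight $d$, and $A$ is the major premiss of some elimination rule $\alpha$. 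The subderivation ending in the conclusion of $\alpha$ contains this whole segment, so it has degree $\geq d$, hence exactly $d$; by minimality it cannot be proper, so $\alpha$ is the last rule of $\Xi$. Thus $\Xi$ ends with an elimination rule whose major premiss is the end of a weight-$d$ maximal segment, $d[\Xi]=d[A]=d$, and all proper subderivations of $\Xi$ have degree $<d$: that is, $\Xi$ is critical.

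Next I would apply Lemma~\ref{lemma:critical} to $\Xi$, obtaining a reduct $\Xi'$ with the same conclusion \textit{stp-c} $\Delta_\Xi;\Sigma_\Xi$, with assumption set contained in that of $\Xi$, and with $d[\Xi']<d$. Plugging $\Xi'$ back in place of $\Xi$ (reductions are closed under derivation contexts) yields a reduct $\Pi_1$ of $\Pi$ with end \textit{stp-c} $\Delta;\Sigma$ and $\Gamma_1\subseteq\Gamma$. By Lemma~\ref{lemma:reduction} the degree does not increase, so $d[\Pi_1]\leq d$, and I claim that $\Pi_1$ has strictly fewer than $n$ maximal segments of weight $d$: every weight-$d$ segment lying inside $\Xi$ is destroyed, because $d[\Xi']<d$, while every maximal segment lying wholly in the surrounding context is untouched, the conclusion \textit{stp-c} of $\Xi$ being unchanged. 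Provided that no new weight-$d$ maximal segment appears at the junction (addressed below), the count drops by at least one. If $d[\Pi_1]<d$ we are done; otherwise $d[\Pi_1]=d$ with fewer weight-$d$ segments, and the induction hypothesis applied to $\Pi_1$ delivers the required $\Pi'$. Iterating Lemma~\ref{lemma:main} itself then drives the degree down to $0$ in the forthcoming normalization theorem, $d$ being a natural number.

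The step I expect to be the main obstacle is precisely the claim that this measure strictly drops, i.e. that no fresh maximal segment of weight $d$ is created at the junction where $\Xi'$ meets the context. This should be settled by a case analysis on the last rule $\alpha$ of $\Xi$, using Lemma~\ref{lemma:junction} to control degrees across the composition: when $\alpha$ is $\cimp$-elim, $\cvee$-elim or $\neg$-elim its conclusion carries an empty stoup, so there is no junction formula at all; when $\alpha$ is $\wedge$-elim or $\iimp$-elim the conclusion formula is a proper subformula of the weight-$d$ major premiss and therefore has weight $<d$, so any detour newly exposed there stays below $d$; and in the remaining case $\alpha=\ivee$-elim — where the conclusion may itself have weight $d$ — the reduct supplied by Lemma~\ref{lemma:critical} presents the junction formula as the consequence of $W_c$ (in the principal reduction) or of $\ivee$-elim or $C_c$ (in the permutative reduction), none of which, by the definition of a segment, can begin a maximal segment. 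Granting this verification, the outer induction on $n$ closes and the lemma follows.
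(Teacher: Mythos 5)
Your route is genuinely different from the paper's. You induct on the number of maximal segments of top weight $d$, locate an innermost critical subderivation $\Xi$, reduce it with Lemma~\ref{lemma:critical}, and plug the reduct back into the surrounding context. The paper instead argues by induction on the length of $\Pi$: in the elimination case it first applies the induction hypothesis to the immediate subderivations of the last rule, so that every proper subderivation of the recombined derivation $\Pi^{*}$ has degree $<d$, and then observes that $\Pi^{*}$ is either already of degree $<d$ or is itself critical, so Lemma~\ref{lemma:critical} applies once at the root. That decomposition never has to count segments or analyse a junction between a reduct and an ambient context, which is exactly where your argument runs into trouble.

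The gap sits at the step you yourself flag as the main obstacle: the claim that the number of weight-$d$ maximal segments strictly decreases. Your justification only excludes the possibility that the \emph{end-formula occurrence} of $\Xi'$ begins a maximal segment. The dangerous segments are those that begin strictly inside $\Xi'$ (at an introduction or $W_i$) and are passed down through $\ivee$-elim minor premisses or $C_c$ until they exit through the end-formula of $\Xi'$ and terminate at an elimination below, in the untouched context. Such a segment has no end inside $\Xi'$, so it is not a maximal segment \emph{of} $\Xi'$ and the bound $d[\Xi']<d$ says nothing about it; when $\alpha$ is $\ivee$-elim its formula is the minor formula of $\alpha$, whose weight is unconstrained and can equal $d$. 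Worse, the permutative reductions and the compositions used inside Lemma~\ref{lemma:critical} (and in Theorem~\ref{thm:comp-stoup}) \emph{duplicate} subderivations -- the minor-premiss derivations $\Sigma_1,\ldots,\Sigma_m$, and whole branches of $\vee$-eliminations -- and with them the starting points of such crossing segments. So while the weight-$d$ segment ending at the major premiss of $\alpha$ is eventually destroyed, the count of weight-$d$ maximal segments of the whole derivation can simultaneously go up, and your induction measure need not decrease. Repairing this requires a more careful choice of which critical subderivation to reduce (in the spirit of Prawitz's condition that no maximal segment of highest degree stand above or beside the chosen one, strengthened to control segments leaving $\Xi$), or simply adopting the paper's structural induction, which sidesteps the junction entirely.
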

\begin{proof}
By induction on the length of $\Pi$.
\begin{itemize}
\item Case 1: $\Pi$ ends with an application of an introduction rule. \\
This case follows directly from the induction hypothesis.
\item Case 2: $\Pi$ ends with an application of an elimination rule. The general form of $\Pi$ is:
\begin{prooftree}
\AxiomC{$\Pi_{1}$}
\AxiomC{$\Pi_{2}$}
\AxiomC{$\Pi_{3}$}
\TrinaryInfC{$\Delta ; \Lambda$}
\end{prooftree}
By the induction hypothesis, $\Pi_{k}$ reduces to a derivation $\Pi_{k}'$ such that $d[\Pi_{k}' ]$ $<$ $d[\Pi_{k} ]$ ($1 \leq k \leq 3$).\\
Let $\Pi^{*}$ be:
\begin{prooftree}
\AxiomC{$\Pi_{1}'$}
\AxiomC{$\Pi_{2}'$}
\AxiomC{$\Pi_{3}'$}
\TrinaryInfC{$\Delta ; \Lambda$}
\end{prooftree}
By Lemma~\ref{lemma:reduction}, $d[\Pi^{*}]$ $\leq$ $d[\Pi ]$. If $d[\Pi^{*}]$ $<$ $d[\Pi ]$, we take $\Pi^{*} = \Pi'$. If $d[\Pi^{*}] = d[\Pi ]$, the $\Pi^{*}$ is a critical derivation and the result follows from Lemma~\ref{lemma:critical}.

\end{itemize}
\end{proof}

\begin{theorem} (\textbf{Normalization Theorem} for $\mathcal{LE}_{p}$)
Let $\Pi$ be a derivation of $\Gamma\vL \Delta ; \Sigma$ in $\mathcal{LE}_p$. Then, $\Pi$ reduces to a normal derivation $\Pi'$ of $\Gamma'\vL \Delta ; \Sigma$ where $\Gamma ' \subseteq \Gamma$.
\end{theorem}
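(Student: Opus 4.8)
The plan is to reduce the theorem to a single induction on the degree $d[\Pi]$, using Lemma~\ref{lemma:main} as the engine that strips away \emph{all} maximal segments of the currently maximal weight in one pass. Concretely, I would argue by induction on the natural number $d[\Pi]$. If $d[\Pi]=0$, then by definition $\Pi$ is already normal, so I take $\Pi'=\Pi$ and $\Gamma'=\Gamma$, which trivially satisfies the claim.

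For the inductive step, suppose $d[\Pi]>0$. By Lemma~\ref{lemma:main}, $\Pi$ reduces to a derivation $\Pi_{1}$ of $\Gamma_{1}\vL\Delta;\Sigma$ with $\Gamma_{1}\subseteq\Gamma$ and $d[\Pi_{1}]<d[\Pi]$; note that both the preservation of the \textit{stp-c} $\Delta;\Sigma$ in the conclusion and the monotone shrinking of the hypothesis set are already guaranteed by that lemma. Since $d[\Pi_{1}]<d[\Pi]$, the induction hypothesis applies to $\Pi_{1}$: it reduces to a normal derivation $\Pi'$ of $\Gamma'\vL\Delta;\Sigma$ with $\Gamma'\subseteq\Gamma_{1}$. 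I would then close the argument by appealing to the transitivity of the reduction relation (the reflexive--transitive closure of one-step reduction) to conclude that $\Pi$ reduces to $\Pi'$, and to the transitivity of $\subseteq$ to conclude $\Gamma'\subseteq\Gamma$. As $\Pi'$ is normal and has the same conclusion $\Delta;\Sigma$, this establishes the theorem.

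I expect essentially no genuine obstacle at this level, since the real work has been discharged by the preparatory lemmas: Lemma~\ref{lemma:junction} and Lemma~\ref{lemma:reduction} control how degree behaves under composition and under a single reduction, the Critical Lemma (Lemma~\ref{lemma:critical}) handles the delicate case where a maximal segment sits immediately below an elimination rule, and Lemma~\ref{lemma:main} packages these into a guaranteed strict drop in degree. The only points that require care, but are routine, are checking that the end \textit{stp-c} $\Delta;\Sigma$ is preserved verbatim through each reduction step while $\Gamma$ may only lose formulas (reductions can erase applications of $W_{i}$, $W_{c}$, or discharged hypotheses), and observing that termination is immediate: because the outer induction runs on the single natural-number parameter $d[\Pi]$ and each invocation of Lemma~\ref{lemma:main} strictly decreases it, there is no need for an auxiliary measure counting the number of top-degree maximal segments, in contrast with the standard Prawitz presentation.
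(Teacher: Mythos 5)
Your proposal is correct and follows exactly the paper's own argument: the paper proves this theorem ``directly from Lemma~\ref{lemma:main} by induction on $d[\Pi]$'', which is precisely the induction on degree you describe. Your write-up merely makes explicit the base case, the appeal to transitivity of reduction and of $\subseteq$, and the preservation of the end \textit{stp-c}, all of which the paper leaves implicit.
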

\begin{proof}
 Directly from Lemma~\ref{lemma:main} by induction on $d[\Pi ]$.
\end{proof}

\section{Concluding remarks and future work}
There are lots of things to be done in the domain of ecumenical systems and more specifically in connection with pure ecumenical systems. We conclude this paper by mentioning  two possible lines of work we are pursuing. 

\subsection{Pure First-order ecumenical systems}
We can easily show that, in Prawitz' ecumenical system, if the main operator of a formula $A$ is classical, if $\Gamma , \neg A \vdash \bot$, then $\Gamma \vdash A$. For example, assume a derivation $\Pi$ as follows:

\begin{prooftree}
\AxiomC{$\Gamma$}
\AxiomC{$\neg (B \vee_{c} C)$}
\noLine
\BinaryInfC{$\Pi$}
\noLine
\UnaryInfC{$\bot$}
\end{prooftree}

Then, we can construct the following derivation of $\Gamma \vdash (B \vee_{c} C)$:

\begin{prooftree}
\AxiomC{$\Gamma$}
\AxiomC{$[(B \vee_{c} C)]^{1}$}
\AxiomC{$[\neg B]^{2}$}
\AxiomC{$[\neg C]^{3}$}
\TrinaryInfC{$\bot$}
\RightLabel{1}
\UnaryInfC{$[\neg (B \vee_{c} C)]$}
\noLine
\BinaryInfC{$\Pi$}
\noLine
\UnaryInfC{$\bot$}
\RightLabel{$2,3$}
\UnaryInfC{$(B \vee_{c} C)$}
\end{prooftree}

The same result holds for the classical existential quantifier. From the derivation

\begin{prooftree}
\AxiomC{$\Gamma$}
\AxiomC{$\neg  \exists_{c} xB(x)$}
\noLine
\BinaryInfC{$\Pi$}
\noLine
\UnaryInfC{$\bot$}
\end{prooftree}

We can easily obtain the derivation:

\begin{prooftree}
\AxiomC{$\Gamma$}
\AxiomC{$[\exists_{c} x(Bx)]^{1}$}
\AxiomC{$[\forall x\neg B(x)]^{2}$}
\BinaryInfC{$\bot$}
\LeftLabel{1}
\UnaryInfC{$[\neg \exists_{c} c(B(x)]$}
\noLine
\BinaryInfC{$\Pi$}
\noLine
\UnaryInfC{$\bot$}
\LeftLabel{2}
\UnaryInfC{$ \exists_{c} xB(x)$}
\end{prooftree}

\noindent From these results it follows that the classical implication $\to_{c}$ satisfies \textit{modus ponens} for classical succedents:

\begin{theorem} 
Let the main operator of the formula $B$ be classical. Then we can prove in Prawitz' system that $\{ (A \to_{c} B), A \} \vdash B$.
\end{theorem}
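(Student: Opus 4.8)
The plan is to obtain the result as a direct corollary of the double-negation-style property for classically-headed formulas that was just illustrated above (for $\cvee$ and $\cexists$), combined with a single application of $\cimp$-elim. Concretely, I would first package that property as a lemma: \emph{if the main operator of $B$ is classical and $\Gamma,\neg B \vdash \bot$, then $\Gamma \vdash B$}. The theorem then follows by instantiating $\Gamma = \{A\cimp B, A\}$ and producing the required derivation of $\bot$ from $\Gamma,\neg B$ by means of the classical implication elimination rule.

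First I would assume $\neg B$ alongside the two hypotheses $A \cimp B$ and $A$. Applying $\cimp$-elim to the triple $A\cimp B$, $A$, $\neg B$ yields $\bot$ immediately, so $\{A\cimp B, A\},\neg B \vdash \bot$. This is exactly the premise shape required by the lemma, so invoking it with $\Gamma = \{A\cimp B, A\}$ discharges $\neg B$ and gives $\{A\cimp B, A\} \vdash B$, as desired. The whole argument is thus a two-line derivation once the lemma is in hand.

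The substantive work is in the lemma, which must be verified by cases on the classical main operator of $B$. The cases $B = B_1 \cvee B_2$ and $B = \cexists x.B'(x)$ are precisely the two derivations displayed immediately before the statement. The remaining case $B = B_1 \cimp B_2$ is handled by the same nesting pattern: assuming $B_1 \cimp B_2$ together with $B_1$ and $\neg B_2$, one derives $\bot$ via $\cimp$-elim; discharging the assumption $B_1 \cimp B_2$ by $\neg$-int produces $\neg(B_1 \cimp B_2)$, which is fed into the given derivation of $\bot$ from $\Gamma,\neg(B_1\cimp B_2)$ to obtain $\bot$ depending only on $\Gamma$, $B_1$ and $\neg B_2$; a final $\cimp$-int discharging $B_1$ and $\neg B_2$ reconstructs $B_1 \cimp B_2$.

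The main obstacle, and the reason for the classicality hypothesis on $B$, is that $\cimp$-elim does not yield $B$ directly but only $\bot$ in the presence of $\neg B$, reflecting the fact that $A\cimp B$ behaves as $\neg A \cvee B$. Recovering $B$ from a refutation of $\neg B$ is precisely the classical step, and it is available only when the introduction rule for the main operator of $B$ proceeds by refutation (i.e. from a bundle of negated subformulas deriving $\bot$), which is exactly the shape of the classical introduction rules $\cvee$-int, $\cexists$-int and $\cimp$-int. For an intuitionistically-headed $B$ the lemma fails, so the hypothesis cannot be dropped; beyond checking that each classical introduction rule admits the nesting used above, no further difficulty arises.
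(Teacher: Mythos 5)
Your proposal is correct and follows exactly the route the paper intends: derive $\bot$ from $A\cimp B$, $A$ and $\neg B$ by a single $\cimp$-elim, then discharge $\neg B$ via the property that $\Gamma,\neg B\vdash\bot$ implies $\Gamma\vdash B$ whenever $B$ has a classical main operator, which the paper establishes by the displayed derivations for $\cvee$ and $\cexists$. You even supply the missing $\cimp$ case of that lemma (and the same nesting handles $p_c$), so the argument is, if anything, more complete than the paper's one-line appeal to ``these results.''
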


\noindent In the propositional system with the \textit{stoup}, we can  obtain the same results for the classical propositional operators. For example, given a derivation of

\begin{prooftree}
\AxiomC{$\cdot ; \neg (B \vee_{c} C)$}
\AxiomC{$\Gamma$}
\noLine
\BinaryInfC{$\Pi$}
\noLine
\UnaryInfC{$\Delta ; \cdot$}
\end{prooftree}

We can obtain a derivation of 

\begin{prooftree}
\AxiomC{$\Gamma$}
\noLine
\UnaryInfC{$\Pi$}
\noLine
\UnaryInfC{$\Delta ; (B \vee_{c} C)$}
\end{prooftree}

\noindent  But in the first-order case, a \textit{pure rule} for the classical existential quantifier in the {\em stoup} format requires an extra attention. Consider the following pure rules with {\em stoup} for the classical existential quantifier:

\begin{prooftree}
\AxiomC{$\Delta_{1} , A(t); \cdot$}
\UnaryInfC{$\Delta ; \exists_{c} xA(x)$}
\DisplayProof
\qquad \qquad
\AxiomC{$\Delta_{1} ; \exists_{c} xA(x)$}
\AxiomC{$\cdot ;A(a)$}
\noLine
\UnaryInfC{$\Pi$}
\noLine
\UnaryInfC{$\Delta_{2} ;\cdot$}
\BinaryInfC{$\Delta_{1} ; \Delta_{2} ;\cdot$}
\end{prooftree}

\noindent It is easy to show that the first order system with {\em stoup} obtained by means of the addition of the intuitionistic rules for $\exists_{i}$, $\forall$ and these rules for $\exists_{c}$ is not complete with respect to Prawitz' first-order ecumenical natural deduction. The important relation $\neg  \forall x\neg A(x) \vdash \exists_{c} xA(x)$ between $\forall$ and $\exists_{c}$ is not derivable in this first-order system with {\em stoup}. As a future work, we propose to investigate the first-order system obtained by the addition of the rules mentioned above plus a new structural rule, the \textit{store} rule:

\begin{center}
\begin{prooftree}
\AxiomC{$\Delta , A;\cdot$}
\RightLabel{{\em store}}
\UnaryInfC{$\Delta ;A$}
\end{prooftree}
\end{center}
with the side condition that the main operator of $A$ is classical.

\subsection{A different approach to \textit{purity}}
A different and interesting approach to pure systems worth exploring is based on some ideas proposed by Julien Murzi in~\cite{Murzi2018}. Murzi proposes a \textit{pure} single-conclusion Natural Deduction system that satisfies the basic inferentialist requirements of harmony and separability. Murzi's proposal combines (in a very interesting way!) Peter Schroeder-Heister's  idea of \textit{higher-level rules} with Neil Tennant's idea that the sign $\bot$ for the \textit{absurd} should be conceived as a \textit{punctuation mark}. Using Murzi's idea we can formulate a new \textit{pure} ecumenical natural deduction system for classical and intuitionistic logic.\\

(1) The \textit{impure} rule for $\vee_{c}$-Int becomes

\begin{prooftree}
\AxiomC{$[\frac{A}{\bot}]^{j}$}
\AxiomC{$[\frac{B}{\bot}]^{k}$}
\noLine
\BinaryInfC{$\underbrace{}$}
\noLine
\UnaryInfC{$\vdots$}
\noLine
\UnaryInfC{$\bot$}
\LeftLabel{$j,k$}
\UnaryInfC{$(A \vee_{c} B)$}
\end{prooftree}

(2) $\vee_{c}$-Elim.

\begin{prooftree}
\AxiomC{$(A \vee_{c} B)$}
\AxiomC{$[A]^{j}$}
\noLine
\UnaryInfC{$\vdots$}
\noLine
\UnaryInfC{$\bot$}
\AxiomC{$[B]^{k}$}
\noLine
\UnaryInfC{$\vdots$}
\noLine
\UnaryInfC{$\bot$}
\LeftLabel{$j,k$}
\TrinaryInfC{$\bot$}
\end{prooftree}

(3) The \textit{impure} rule for $\to_{c}$-Int becomes

\begin{prooftree}
\AxiomC{$[A]^{j}$}
\AxiomC{$[\frac{B}{\bot}]^{k}$}
\noLine
\BinaryInfC{$\underbrace{}$}
\noLine
\UnaryInfC{$\vdots$}
\noLine
\UnaryInfC{$\bot$}
\LeftLabel{$j,k$}
\UnaryInfC{$(A \to_{c} B)$}
\end{prooftree}

(4) $\to_{c}$-Elim.

\begin{prooftree}
\AxiomC{$(A \to_{c} B)$}
\AxiomC{$A$}
\AxiomC{$[B]^{k}$}
\noLine
\UnaryInfC{$\vdots$}
\noLine
\UnaryInfC{$\bot$}
\LeftLabel{$k$}
\TrinaryInfC{$\bot$}
\end{prooftree}

It is easy to show that the impure rules can be obtained from the new pure rules. In the case of $\vee_{c}$, for example, given a derivation $\Pi$ of $\bot$ from $\neg A$ and $\neg B$, we can construct the following derivation:
\begin{prooftree}
\AxiomC{$[\frac{A}{\bot}]^{j}$}
\UnaryInfC{$[\neg A]$}
\AxiomC{$[\frac{B}{\bot}]^{k}$}
\UnaryInfC{$[\neg B]$}
\noLine
\BinaryInfC{$\Pi$}
\noLine
\UnaryInfC{$\bot$}
\LeftLabel{$j,k$}
\UnaryInfC{$(A \vee_{c} B)$}
\end{prooftree}

In order to prove the other direction, it is convenient to add a new general rule that allows us to conclude rules. In the formulation of the rule we will use (as Murzi does) the expression $\Delta/A$ as an alternative to the rule $\frac{\Delta }{A}$.

\begin{prooftree}
\AxiomC{$[A]^{j}$}
\noLine
\UnaryInfC{$\Pi$}
\noLine
\UnaryInfC{$B$}
\LeftLabel{$j$}
\UnaryInfC{$(A /B)$}
\end{prooftree}

Suppose now that we have a derivation 

\begin{prooftree}
\AxiomC{$\frac{A}{\bot}$}
\AxiomC{$\frac{B}{\bot}$}
\noLine
\BinaryInfC{$\Pi$}
\noLine
\UnaryInfC{$\bot$}
\end{prooftree}

We can then construct the following derivation:
\begin{prooftree}
\AxiomC{$[A]^{1}$}
\AxiomC{$[\neg A]^{3}$}
\BinaryInfC{$\bot$}
\LeftLabel{1}
\UnaryInfC{$[(A /\bot )]$}
\AxiomC{$[B]^{2}$}
\AxiomC{$[\neg B]^{4}$}
\BinaryInfC{$\bot$}
\LeftLabel{2}
\UnaryInfC{$[(B /\bot )]$}
\noLine
\BinaryInfC{$\Pi$}
\noLine
\UnaryInfC{$\bot$}
\LeftLabel{$3,4$}
\UnaryInfC{$(A \vee_{c} B)$}
\end{prooftree}

An example: \textit{Peirce's law}:

\begin{prooftree}
\AxiomC{$((A \to_{c} B) \to_{c} A)^{2}$}
\AxiomC{$[A]^{1}$}
\AxiomC{$[\frac{A}{\bot}]^{3}$}
\BinaryInfC{$\bot$}
\LeftLabel{1}
\UnaryInfC{$(A \to_{c} B)$}
\AxiomC{$[\frac{A}{\bot}]^{3}$}
\TrinaryInfC{$\bot$}
\LeftLabel{$2,3$}
\UnaryInfC{$(((A \to_{c} B) \to_{c} A) \to_{c} A)$}
\end{prooftree}

\end{document}